\DeclareMathOperator*{\Hg}{{\bf H}} 
\title{How to Distribute Computation in Networks} 
\author{%
  \IEEEauthorblockN{Derya~Malak}
  \IEEEauthorblockA{Electrical, Computer, and Systems Engineering, RPI\\
                    malakd@rpi.edu}
  \and
  \IEEEauthorblockN{Alejandro~Cohen and Muriel~M\'{e}dard}
  \IEEEauthorblockA{Research Laboratory of Electronics, MIT\\
                    \{cohenale,\,medard\}@mit.edu}
}
\newtheorem{defi}{Definition}
\newtheorem{prop}{Proposition}
\newtheorem{remark}{Remark}
\newtheorem{cor}{Corollary}
\newtheorem{ex}{Example}
\begin{document}
\maketitle

\begin{abstract}
In network function computation is as a means to reduce the required communication flow in terms of number of bits transmitted per source symbol. However, the rate region for the function computation problem in general topologies is an open problem, and has only been considered under certain restrictive assumptions (e.g. tree networks, linear functions, etc.). 
In this paper, we propose a new perspective for distributing computation, and formulate a flow-based delay cost minimization problem that jointly captures the costs of communications and computation. We introduce the notion of entropic surjectivity as a measure to determine how sparse the function is and to understand the limits of computation. 
Exploiting Little's law for stationary systems, we provide a connection between this new notion and the computation processing factor that reflects the proportion of flow that requires communications. This connection gives us an understanding of how much a node (in isolation) should compute to communicate the desired function within the network without putting any assumptions on the topology. Our analysis characterizes the functions only via their entropic surjectivity, and provides insight into how to distribute computation. We numerically test our technique for search, MapReduce, and classification tasks, and infer for each task how sensitive the processing factor to the entropic surjectivity is. 
\end{abstract}

\maketitle

\section{Introduction}
\label{intro}
Challenges in cloud computing include effectively distributing computation to handle the large volume of data with growing computational demand, and the limited resources in the air interface. Furthermore, various tasks such as computation, storage, communications are inseparable. In network computation is required for reasons of dimensioning, scaling and security, where data is geographically dispersed. We need to exploit the sparsity of data within and across sources, as well as the additional sparsity inherent to labeling (function), to provide approximately minimal representations for labeling. 

An equivalent notion to that sparsity is that of data redundancy. Data is redundant in the sense that there exists, a possibly latent and ill understood, sparse representation of it that is parsimonious and minimal, and that allows for data reconstruction, possibly in an approximate manner. Redundancy can occur in a single source of data or across multiple sources.

Providing such sparse representation for the reconstruction of data is the topic of compression, or source coding. The Shannon entropy rate of data provides, for a single source, a measure of the minimal representation, in terms of bits per second, required to represent data. This representation is truly minimal, in the sense that it is achievable with arbitrarily small error or distortion, but arbitrarily good fidelity of reconstruction is provably impossible at lower rates. 

\subsection{Motivation}\label{motivation}

As computation becomes increasingly reliant on numerous, possibly geo-dispersed, sources of data, making use of redundancy across multiple sources without the need for onerous coordination across sources becomes increasingly important. The fact that a minimal representation of data can occur across sources without the need for coordination is the topic of distributed compression. The core result is that of Slepian and Wolf \cite{SlepWolf1973}, who showed that distributed compression without coordination across source can be as efficient, in terms of asymptotic minimality of representation. 

Techniques for achieving compression have traditionally relied on coding techniques. Coding, however, suffers from a considerable cost, as it imputes, beyond sampling and quantization, computation and processing at the source before transmission, then computation and processing at the destination after reception of the transmission. A secondary consideration is that coding techniques, to be efficiently reconstructed at the destination, generally require detailed information about the probabilistic structure of the data being represented. For distributed compression, the difficulty of reconstruction rendered the results in \cite{SlepWolf1973} impractical until the 2000s, when channel coding techniques were adapted.

In the case of learning on data, however, it is not the data itself but rather a labeling of it that we seek. That labeling can be viewed as being a function of the original data. The reconstruction of data is in effect a degenerate case where the function is identity. Labeling is generally a highly surjective function and thus induces sparsity, or redundancy, in its output values beyond the sparsity that may be present in the data. 
 
The use of the redundancy in both functions and data to provide sparse representations of functions outputs is the topic of the rather nascent field of functional compression. A centralized communication scheme requires all data to be transmitted to some central unit in order to perform certain computations. However, in many cases such computations can be performed in a distributed manner at different nodes in the network avoiding transmission of unnecessary information in the network. Hence, intermediate computations can significantly reduce the resource usage, and this can help improve the trade-off between communications and computation.

\begin{figure*}[t!]
\centering
\includegraphics[width=\textwidth]{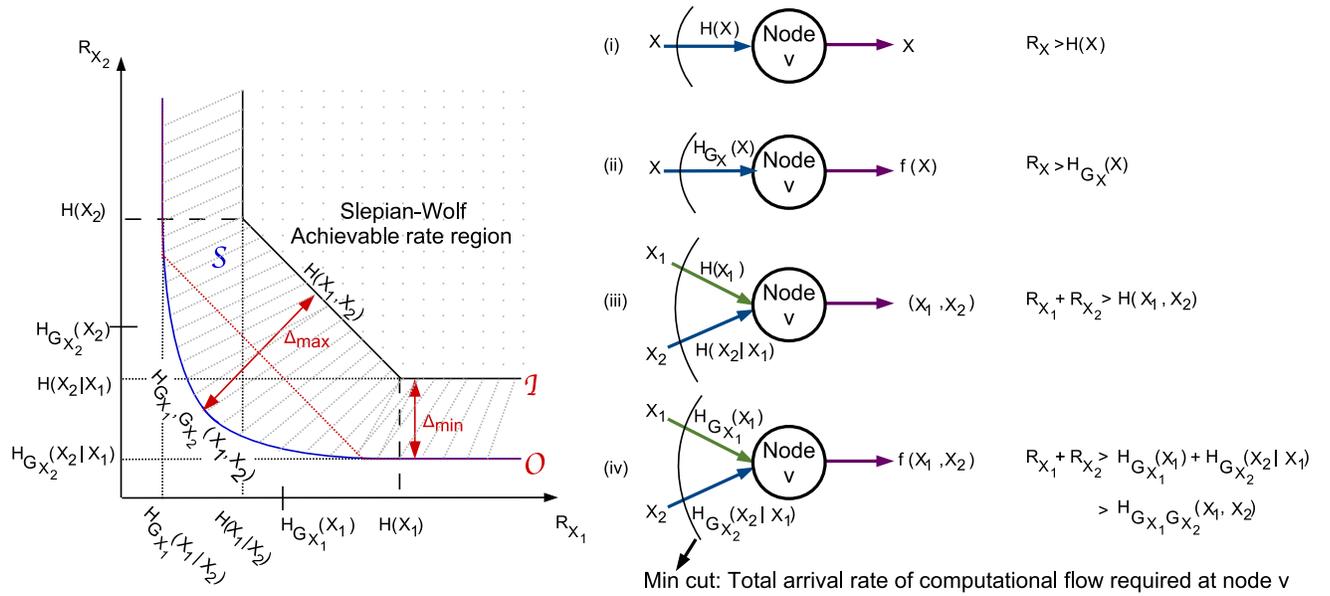}
\caption{(Left) Example rate region for the zero distortion distributed functional compression problem \cite{DosShaMedEff2010}. $\mathcal{S}$ denotes the shaded region between the joint entropy $H(X_1,X_2)$ curve (inner bound $\mathcal{I}$) and the joint graph entropy curve of $H_{G_{X_1},G_{X_2}}(X_1,X_2)$ (outer bound $\mathcal{O}$). Note that any point above $\mathcal{I}$ is the Slepian-Wolf achievable rate region, and $\mathcal{O}$ is characterized by how surjective the graph entropy is. (Right) Example scenarios with achievable rates: rate region for (i) source compression, (ii) functional compression, (iii) distributed source compression with two transmitters and a receiver, and (iv) distributed functional compression with two transmitters and a receiver. Note that in (iv) the main benefit of joint graph entropy $H_{G_{X_1},G_{X_2}}(X_1,X_2)$ is that it is less than the sum of the marginal graph entropy of source $X_1$, i.e. $H_{G_{X_1}}(X_1)$, and the conditional graph entropy of source $X_2$ given $X_1$, i.e. $H_{G_{X_2}}(X_2|X_1)$. Joint graph entropy provides a better rate region than the joint entropy since it does not satisfy the chain rule. Hence, we expect to have $\Delta_{\max}>\Delta_{\min}$ in the left figure.}
\label{SWandGraphEntropy_surjectivity_details}
\end{figure*} 	

\subsection{Technical Background}
\label{compresstocompute}
In this section, we introduce some concepts from information theory which characterize the minimum communication (in terms of rate) necessary to reliably evaluate a function. In particular, this problem which is referred to as distributed functional compression, has been studied under various forms since the pioneering work of Slepian and Wolf \cite{SlepWolf1973}. 

An object of interest in the study of these fundamental limits is the characteristic graph, and in particular its coloring.
In the characteristic graph, each vertex represent a possible different sample value, and two vertices are connected if they should be distinguished. More precisely, for a collection of random variables $X_1,\ldots,X_n$ assumed to take values in the same alphabet $\mathcal{X}$, and a function $g: \mathcal{X} \to \mathcal{Y}$, we draw an edge between vertices $u$ and $v \in \mathcal{X}$, if $g(u, x_2,\ldots,x_n) \neq g(v, x_2,\ldots, x_n)$ for any $x_2,\ldots,x_n$ whose joint instance has non-zero measure. We illustrate the characteristic graph and its relevance in compression through the following example.

\paragraph{Slepian-Wolf Coding (or Compression)}
We start by reviewing the natural scenario where the function $f(X_1,\ldots,X_n)$ is the identity function, i.e., the case of distributed lossless compression. For sake of presentation, we focus on the case of two random variables $X_1$ and $X_2$, which are jointly distributed according to $P_{X_1,X_2}$.
Source random variable $X_1$ can be asymptotically compressed up to the rate $H(X_1|X_2)$ when $X_2$ is available at the receiver \cite{SlepWolf1973}. Given two statistically dependent i.i.d. finite alphabet sequences $X_1$ and $X_2$, the Slepian-Wolf theorem gives a theoretical bound for the lossless coding rate for distributed coding of the two sources as shown below \cite{SlepWolf1973}:
\begin{align}
\label{rateregionSW}
R_{X_1} \geq H(X_1|X_2),\quad
R_{X_2} \geq H(X_2|X_1)\nonumber\\
R_{X_1}+R_{X_2} \geq H(X_1,X_2).
\end{align}
We denote the rate region in (\ref{rateregionSW}) by $\mathcal{R}(X_1,X_2)$. The Slepian-Wolf theorem states that in order to recover a joint source $(X_1,X_2)$ at a receiver, it is both necessary and sufficient to encode separately sources  $X_1$ and $X_2$  at rates $(R_{X_1}, R_{X_2})$ where (\ref{rateregionSW}) is satisfied \cite{DosShaMedEff2010}. Note that the encoding is done in a truly distributed way, i.e. no communication or coordination is necessary between the encoders. Distributed coding can achieve arbitrarily small error probability for long sequences.

One of the challenge in function computation is the function $f$ on the data $X$ itself. Whether or not having correlations among the source random variables (or data) $X$, due to the mapping from the sources to the destinations the codebook design becomes very challenging. Since the rate region of the distributed function computation problem depends on the function, designing achievable schemes for the optimal rate region for function computation (or  compression) (for general functions, with/without correlations) remains an open problem. We aim to develop a tractable approach for computing general functions using the tools discussed next.

\paragraph{Graph Entropy for Characterizing the Rate Bounds}
Given a graph $G_{X_1} = (V_{X_1} , E_{X_1} )$ and a distribution on its vertices $V_{X_1}$, the graph entropy is expressed as
\begin{align}
H_{G_{X_1}}(X_1)= \min\limits_{X_1\in W_1\in\Gamma(G_{X_1})} I(X_1; W_1), 
\end{align}
where $\Gamma(G_{X_1})$ is the set of all maximal independent sets of $G_{X_1}$. The notation $X_1 \in W_1 \in \Gamma(G_{X_1})$ means that we are minimizing over all distributions $p(w_1, x_1)$ such that $p(w_1,x_1) > 0$ implies $x_1 \in w_1$, where $w_1$ is a maximal independent set of the graph $G_{x_1}$.

In \cite[Theorem 41]{FeiMed2014}, authors have determined the rate region for a distributed functional compression problem with two transmitters and a receiver. This rate region is given by 
\begin{align}
\label{rateregiongraph}
R_{11} \geq H_{G_{X_1}}(X_1|X_2),\quad
R_{12} \geq H_{G_{X_2}}(X_2|X_1)\nonumber\\
R_{11}+R_{12} \geq H_{G_{X_1},G_{X_2}}(X_1,X_2),
\end{align}
where $G_X$ is the characteristic graph of $f$ on the data $X$, and $H_{G_{X_1},G_{X_2}}(X_1,X_2)$ is the joint graph entropy of the sources.

To summarize, the role of in network function computation is to reduce the amount of rate needed to be able to recover a function on data, and the amount of reduction is observed as
\begin{equation}
H(X) \to H_G(\cdot) \to H_{G_X}(X).
\end{equation}

An achievable scheme for the above functional compression problem has been provided in \cite{FeiMed2014}. In the scheme, the sources compute colorings of high probability subgraphs of their $G_X$ and perform source coding on these colorings and send them. Intermediate nodes compute the colorings for their parents', and by using a look-up table (to compute their functions), they find corresponding source values of received colorings.

In Figure \ref{SWandGraphEntropy_surjectivity_details}, we illustrate the Slepian-Wolf compression rate region $\mathcal{I}$ in (\ref{rateregionSW}) versus the outer bound $\mathcal{O}$ (convex) determined by the joint graph entropy of variables $X_1$ and $X_2$, as given in (\ref{rateregiongraph}). In the graph, the region between two bounds, denoted by $\mathcal{S}$, determines the limits of the functional compression. We denote the depth of this region by $\delta$ that satisfies $\delta\in[\Delta_{\min},\Delta_{\max}]$. This region indicates that there could be potentially a lot of benefit in exploiting the compressibility of the function to reduce communication. The convexity of $\mathcal{O}$ of $\mathcal{S}$ can be used to exploit the tradeoff between communications and computation, which is mainly determined by the network, data and correlations, and functions. A notion of compressibility is the deficiency metric introduced in  \cite{FuaFenWanCar2018}.

\begin{defi}{\bf Deficiency \cite{PanSakSteWan2011}.}
Let $G_1$ and $G_2$ be finite Abelian groups of the same cardinality $n$ and $f: G_1 \to G_2$. Let $G_1^{*} =G_1\backslash\{0\}$ and $G_2^{*} =G_2\backslash\{0\}$. For any $a\in G_1^{*}$ and $b\in G_2^{*}$, we denote
$\Delta_{f,a} (x) = f(x+a)-f(x)$ and $\lambda_{a,b}(f) = \#\Delta_{f,a}^{-1}(b)$. Let 
$\alpha_i(f)=\#\{(a,b)\in G_1^{*}\times G_2\vert \lambda_{a,b}(f)=i\}$ for $0\leq i\leq n$. We call $\alpha_0(f)$ the deficiency of $f$. Hence $\alpha_0(f)$ measures the number of pairs $(a,b)$ such that $\Delta_{f,a}(x) = b$ has no solutions. This is a measure of the surjectivity of $\Delta_{f,a}$; the lower the deficiency the closer the $\Delta_{f,a}$ are to surjective.
\end{defi}

Although Figure \ref{SWandGraphEntropy_surjectivity_details} gives insights on the limits of compression, it is not clear which point in the outer bound $\mathcal{O}$ provides the best solution from a joint optimization of communication and computation. In particular, as highlighted in Sect. \ref{compresstocompute}, constructing optimal compression codes imposes a significant computational burden on the encoders and decoders, since the achievable schemes are based on NP-hard concepts. If the cost of computation were insignificant, it would be optimal to operate at max $\Delta$. However, when the computation cost is not negligible, there will be a strain between the costs of communication and computation. To capture this balance, we propose to follow a different approach, as detailed in Sect. \ref{networkmodel}.

	\subsection{Contributions}	
	\label{contributions}
	The function computation task in networks is very challenging, and to the best of our knowledge, is unknown except for special cases as outlined in Sect. \ref{relatedwork}. In this paper, we provide a fresh look at this problem from a networking perspective. 
	
	Our contributions are as follows. We provide a cost model for a general network topology for performance characterization of distributed function computation by jointly considering the computation and communications aspects. We introduce entropic surjectivity as a measure to determine how sparse a function is. We devise a flow-based delay cost minimization technique that incorporates the costs of communications and computation. While we assume that the communications cost is convex in flow, we use general cost functions for computation. The enabler of our approach is the connection between Little's law for stationary systems and proportion of flow that requires communications (i.e. computation processing factor) that is determined by the entropic surjectivity of functions.
				  
Our goal is to employ/devise distributed (function) compression techniques in general network topologies (stationary and Jackson type networks where the approach allows for the treatment of individual nodes in isolation, independent of the network topology. Therefore, we do not have to restrict ourselves to cascading operations as in \cite{FeiMed2010allerton} due to the restriction of topology to linear operations.) as a simple means of exploiting function's entropic surjectivity (a notion of sparsity inherent to labeling), by employing the concepts of graph entropy, in order to provide approximately minimal representations for labeling. Labels can be viewed as colors on characteristic graph of the function on the data, where in our case the labeling is the function, is central to functional compression\footnote{The entropy rate of the coloring of the function's power conflict graph upon vectors of data characterizes the minimal representation needed to reconstruct with fidelity the desired function of the data \cite{Korner1973}.  The degenerate case of the identity function corresponds to having a complete characteristic graph.}. Our main insight is that, the main characteristics required for operating the distributed computation scheme are those associated with the entropic surjectivity of the functions. 

The advantages of the proposed approach is as follows. It does not put any assumptions on the network topology and characterizes the functions only via their entropic surjectivity, and provides insight into how to distribute computation/compression depending on the entropic surjectivity of the computation task, how to distribute computation, and how to use the available resources among different computation tasks, and how it compares with the centralized solution. Our results imply that most of the available resources will go to the computation of low complexity functions and fewer resources will be allocated to the processing of high complexity functions.	
	
The organization for the rest of the paper is as follows. In Sect. \ref{relatedwork}, we review the related work. In Sect. \ref{networkmodel}, we detail how to model computation, and derive some lower bounds on the rate of generated flows (i.e. processing factors) of the nodes by linking the computation problem to Little's law. In Sect. \ref{performance}, we present numerical results and discuss possible directions.

\section{Related Work}
\label{relatedwork}
Compressed sensing and information theoretic limits of representation provide a solid basis for function computation in distributed environments. Problem of distributed compression has been considered from different perspectives. For source compression, distributed source coding using syndromes (DISCUS) have been proposed \cite{PradRam2003}, and source-splitting techniques have been discussed \cite{ColLeeMedEff2006}. For data compression, there exist some information theoretic limits, such as side information problem \cite{WynZiv1976}, Slepian-Wolf coding or compression for depth-one trees \cite{SlepWolf1973}, which can be generalized to trees, and  general networks via multicast and random linear network coding \cite{HoMedKoeKarEffShiLeo2006}. 

In functional compression, a function of sources is sought at destination. Korner introduced graph entropy \cite{Korner1973}, which was used in characterizing rate bounds in various functional compression setups \cite{AlonOrlit1996}. For a general function and a configuration where one source is local and another collocated with the destination, Orlitsky and Roche provided a single-letter characterization of the rate-region in \cite{OrlRoc2001}. In \cite{DosShaMedEff2010} and \cite{FeiMed2014}  authors investigated graph coloring approaches for tree networks. In \cite{FES04} authors computed a rate-distortion region for  functional compression with side information. Another class of work considered the in network computation problem for some specific functions. In \cite{KowKum2010} authors investigated the cut-set bounds for the computation 
of symmetric Boolean functions in tree networks. The asymptotic analysis of the rate in broadcast networks has been conducted in \cite{Gal88}, and in random geometric graphs in \cite{KM08}. Function computation has been studied using multi-commodity flow techniques in \cite{ShaDeyMan2013}. There do not exist, however, tractable approaches to perform functional compression in ways that approximate the information theoretic limits. Thus, unlike the case for compression, where coding techniques exist and compressed sensing acts in effect as an alternative for coding, for purposes of simplicity and robustness, there are currently no family of coding techniques for functional compression.

Computing capacity of a network code is the maximum number of times the target function can be computed for one use of the network \cite{HuanTanYangGua2018}. This capacity for special cases such as trees, identity function \cite{LiYeuCai2003}, linear network codes to achieve the multicast capacity have been studied \cite{LiYeuCai2003}, \cite{KoeMed2003}. For scalar linear functions, the computing capacity can be fully characterized by min cut \cite{KoeEffHoMed2004}. For vector linear functions over a finite field, necessary and sufficient conditions have been obtained so that linear network codes are sufficient to calculate the function \cite{AppusFran2014}. For general functions and network topologies, upper bounds on the computing capacity based on cut sets have been studied \cite{KowKum2010}, \cite{KowKum2012}. In \cite{HuanTanYangGua2018}, authors generalize the equivalence relation for the computing capacity. However, in these papers, characterizations based on the equivalence relation associated with the target function is only valid for special network topologies, e.g., the multi-edge tree. For more general networks, this equivalence relation is not sufficient to explore the general function computation problems. 

Coding for computation have been widely studied in the context of multi-stage computations \cite{LiAliYuAves2018} which generally focus on linear reduce functions (since many reduce functions of interest are linear); heterogeneous networks and asymmetric computations \cite{KiaWanAves2017}, and compressed coded computing \cite{LiAliYuAves2018}, which focused on computations of single-stage functions in networks. Coded computing aims to tradeoff the communication (bottlenecks) by injecting computations. While fully distributed algorithms might cause a high communication load, fully centralized systems can suffer from high computation load. With distributed computing at intermediate nodes by exploiting multicast coding opportunities, the communication load can be significantly reduced, can be made inversely proportional to the computation load \cite{LiAliYuAves2018}, \cite{kamran2019deco}. The rate-memory tradeoff for function computation has been studied in \cite{YuAliAves2018}. Different coding schemes to improve the recovery threshold include Lagrange coded computing \cite{YuRavSoAve2018}, and polynomial codes for distributed matrix multiplication \cite{YuAliAve2017}.   

In functional compression, functions themselves can also be exploited. There exist functions with special structures, such as sparsity promoting functions \cite{SheSutTri2018}, symmetric functions, type sensitive and threshold functions \cite{GK05}. One can also exploit a function's surjectivity. There are different notions on how to measure surjectivity, such as deficiency \cite{FuaFenWanCar2018}, ambiguity \cite{PanSakSteWan2011}, and equivalence relationships among function families \cite{gorodilova2019differential}.

\section{Modeling Computation in Networks}
\label{networkmodel}
In this section, we want to answer the following questions: How to handle large, distributed data? What is the rate region of the distributed functional compression problem for a general function? Where to place computation and memory? When to do computations? How to model computation in networks?

As a first step to ease this problem, we will provide a utility-based approach for general cost functions. As special cases, we continue with simple example of point search ($O(\log N)$), then MapReduce $(O(N))$, then the binary classification model ($O(\exp(N))$). Our main contribution is to provide the link between the function computation problem and Little's law. 
	 
	We consider a general stationary network topology. While sources can be correlated, and computations are allowed at intermediate nodes, we compute some deterministic functions. Our goal is to effectively distribute computation. Intermediate nodes need to decide whether to compute or relay. At each node, computation is followed by computation (causality) while satisfying stability conditions. We consider a decentralized solution. This yields a threshold of flow (i.e. processing factor) to be able to perform computation. We also consider a centralized solution which can be obtained by solving an optimization problem by using appropriate cost functions.		
	
	We use the following notation. The set of source random variables is denoted by $X=X_1^N=X_1,\hdots X_N$. Arrival rate of type $c$ flow at node $v$ is $\lambda_v^c$. Service rate of type $c$ flow at node $v$ is given by $\mu_v^c$. Average number of packets at node $v$ due to the processing of type $c$ function is $M_v^c$. 
	Function of type $c$ is denoted by $f_c(X_1^N)$. In this section and in the remaining of the paper, we drop the subscript in graph entropy  $H_{G_X}(X)$, and instead use the boldface notation $\Hg(f(X))$ to show the dependency of the graph entropy on the function $f$ on the data $X$. Hence, the (graph) entropy of function $f_c(X_1^N)$ is $\Hg(f_c(X_1^N))$. Time complexity of generating/processing a flow of type $c$ at node $v$ is $d_f(M_v^c)$. The generation rate of the flow, i.e. the processing factor, of type $c$ at node $v$ is $\gamma_f(\lambda_v^c)$.

\subsection{Computing with Little's Law}
	
In this section, we connect the computation problem to Little's law. Little's law states that the long-term average number $L$ of packets in a stationary system is equal to the long-term average effective arrival rate $\lambda$ multiplied by the average time $W$ that a packet spends in the system. More formally, it can be expressed as $L=\lambda W$. The result applies to any system that is stable and non-preemptive, and the relationship does not depend on the distribution of the arrival process, the service distribution, and the service order \cite{Klein1975}.
		
In our setting, the average time a packet spends in the system is given by the addition of the total time required by computation followed by the total time required by communications. We formulate a utility-based optimization problem by decoupling the costs of communications and computation:
	\begin{equation}
	\label{costoptimization}
	\begin{aligned}
	{\rm MinCost}: \,\,& \underset{\rho}{\min}
	& & C = \sum\limits_{v\in V} \sum\limits_{c\in\mathcal{C}} W_{v}^c\\	
	& \hspace{0.3cm}\text{s.t.}
	& & \rho_v^c< 1, \quad \forall c\in \mathcal{C},\,\, v\in V,
	\end{aligned}
	\end{equation}		
	where $W_v^c=C_{v,comp}^c+C_{v,comm}^c$ captures the total delay, and 
	$C_{v,comp}^c$ and $C_{v,comm}^c$
	are positive delay cost functions that are non-decreasing in flow. The delays of computation and communications for processing functions of type $c\in\mathcal{C}$ are 
	\begin{align}
	\label{delaycostperflow}
	C_{v,comp}^c=\frac{1}{\lambda_v^c}d_f(M_v^c),\quad
	C_{v,comm}^c=\frac{1}{\mu_v^c-\gamma_f(\lambda_v^c)},
	\end{align}
	where $d_f$ models the time complexity of computation, i.e. the total time needed to process all the incoming packets and generate the desired function outcomes. The term $\gamma_f(\lambda_v^c)$ characterizes the amount of computation flow rate generated by node $v$ for function of type $c$. Hence, the second term on the right hand side captures the waiting time, i.e. the queueing and service time of a packet. Hence, by Little's law, we expect that the long-term average number $L_v^c$ of packets in node $v$ for function of type $c\in\mathcal{C}$ satisfies the following relation
	\begin{align}
	\label{LittleComputation}
	L_v^c=\gamma_f(\lambda_v^c)W_{v}^c,
	\end{align}
	where we aim to infer the value of $\gamma_f(\lambda_v^c)$ using Little's law.

	The connection between $L_v^c$ and $M_v^c$ can be given as 
	\begin{align}
	M_v^c=L_v^c\left(1-{\gamma_f(\lambda_v^c)}/{\lambda_v^c}\right).
	\end{align}
	
		For simplicity of notation, let $\rho_v^c={\lambda_v^c}/{\mu_v^c}\in[0,1)$, and $\rho=[\rho_v^c]_{c\in \mathcal{C},\,v\in V}$, $\lambda=[\lambda_v^c]_{c\in \mathcal{C},\,v\in V}$, and $\mu=[\mu_v^c]_{c\in \mathcal{C},\,v\in V}$. 
	
	The following gives a characterization of $L_v^c$ by simple lower and upper bounding techniques.
	\begin{prop}\label{FlowResult} {\bf Flow bounds.} The long-term average number of packets in $v$ for type $c$ flow satisfies
	\begin{align}
	\label{FlowBounds}
	\hspace{-0.31cm}\frac{\Hg(f_c(X_1^N))}{2}\!+\!1\!-\sqrt{\Big(\frac{\Hg(f_c(X_1^N))}{2}\Big)^2+1} \leq L_v^c\leq M_v^c.
	\end{align}
	\end{prop}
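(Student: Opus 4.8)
\smallskip
\noindent\emph{Proof idea.} The engine of the argument is Little's law as stated in (\ref{LittleComputation}), $L_v^c=\gamma_f(\lambda_v^c)\,W_v^c$, together with the decomposition $W_v^c=C_{v,comp}^c+C_{v,comm}^c$ and the explicit delay forms in (\ref{delaycostperflow}). Substituting gives $L_v^c=\gamma_f(\lambda_v^c)\big(\tfrac{1}{\lambda_v^c}d_f(M_v^c)+\tfrac{1}{\mu_v^c-\gamma_f(\lambda_v^c)}\big)$, so that $L_v^c$ is pinned down by the processing factor $\gamma_f(\lambda_v^c)$, the service rate $\mu_v^c$, and the computation workload $M_v^c$; the second summand is the $\mathrm{M/M/1}$-type mean delay $1/(\mu_v^c-\gamma_f(\lambda_v^c))$, and the stability constraint $\rho_v^c<1$ of (\ref{costoptimization}) together with $\gamma_f(\lambda_v^c)\le\lambda_v^c$ guarantees $\gamma_f(\lambda_v^c)<\mu_v^c$. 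The upper bound $L_v^c\le M_v^c$ I would obtain directly from the relation between $M_v^c$ and $L_v^c$ recorded just before the proposition, using that the processing factor $\gamma_f(\lambda_v^c)/\lambda_v^c$ lies in $[0,1]$, so I would clear it out first and concentrate on the lower bound.

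For the lower bound I would first discard the nonnegative computation-delay term and keep only the communications queue, obtaining $L_v^c\ge\gamma_f(\lambda_v^c)\,C_{v,comm}^c=\sigma_v^c/(1-\sigma_v^c)$ with $\sigma_v^c:=\gamma_f(\lambda_v^c)/\mu_v^c\in[0,1)$. I would then insert the information-theoretic content: $\Hg(f_c(X_1^N))$ is the minimum rate needed to convey $f_c$ (Sect.~\ref{compresstocompute}), so the flow the node must still generate and communicate is at least as heavy as the graph entropy prescribes, which is exactly what the entropic surjectivity of $f_c$ measures; this lower-bounds $\sigma_v^c$, equivalently it forces at least $\Hg(f_c(X_1^N))$ worth of processing into the budget $W_v^c$. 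Propagating that bound through the previous display turns the estimate on $L_v^c$ into the quadratic inequality $(L_v^c)^2-\big(\Hg(f_c(X_1^N))+2\big)L_v^c+\Hg(f_c(X_1^N))\le 0$, equivalently $\dfrac{L_v^c(2-L_v^c)}{1-L_v^c}\ge\Hg(f_c(X_1^N))$.

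The rest is algebra. The roots of $x^2-(\Hg(f_c(X_1^N))+2)x+\Hg(f_c(X_1^N))$ are $\tfrac{\Hg(f_c(X_1^N))}{2}+1\pm\sqrt{\big(\tfrac{\Hg(f_c(X_1^N))}{2}\big)^2+1}$, and the smaller root, call it $L^{\mathrm{low}}$, is precisely the left-hand side of (\ref{FlowBounds}) and satisfies $0\le L^{\mathrm{low}}<1$. If $L_v^c\ge 1$ the claimed inequality is immediate since $L^{\mathrm{low}}<1$; if $L_v^c<1$, then because $x\mapsto x(2-x)/(1-x)$ is strictly increasing on $(0,1)$ (the numerator of its derivative is $(x-1)^2+1>0$), the inequality $L_v^c(2-L_v^c)/(1-L_v^c)\ge\Hg(f_c(X_1^N))$ is equivalent to $L_v^c\ge L^{\mathrm{low}}$, which finishes the proof (the case of zero graph entropy being trivial).

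I expect the only real obstacle to be the step where $\Hg(f_c(X_1^N))$ enters the flow balance: one must make precise, via the graph-entropy characterization of the achievable rate region recalled in Sect.~\ref{compresstocompute}, how the entropic surjectivity of $f_c$ lower-bounds the generated flow rate $\gamma_f(\lambda_v^c)$ (or, equivalently, a minimum processing time inside $W_v^c$), and then verify that carrying this bound through Little's law reproduces exactly the coefficients $\big(1,-(\Hg(f_c(X_1^N))+2),\Hg(f_c(X_1^N))\big)$ of the quadratic above; everything after that is bookkeeping plus the elementary monotonicity fact used in the last step.
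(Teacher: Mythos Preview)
Your upper-bound argument is backwards. From $M_v^c=L_v^c\big(1-\gamma_f(\lambda_v^c)/\lambda_v^c\big)$ with $\gamma_f(\lambda_v^c)/\lambda_v^c\in[0,1]$ you get $M_v^c\le L_v^c$, not $L_v^c\le M_v^c$. The paper does not use that relation here; it argues instead that $M_v^c=\lambda_v^c/(\mu_v^c-\lambda_v^c)$ is the long-term packet count in the \emph{no-computation} regime, and that allowing computation can only reduce the count, so $L_v^c\le M_v^c$.

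For the lower bound your route also diverges from the paper's. The paper's key move is to assert directly that
\[
L_v^c=\gamma_f(\lambda_v^c)\Big[\tfrac{1}{\lambda_v^c}d_f(M_v^c)+\tfrac{1}{\mu_v^c-\gamma_f(\lambda_v^c)}\Big]\ \ge\ \Hg\!\big(f_c(X_1^N)\big),
\]
the inequality being the information-theoretic requirement for recovering $f_c$ at the destination. Since the left-hand side of (\ref{FlowBounds}) is always $\le \Hg(f_c(X_1^N))$ (as the paper itself remarks right after the proposition), this already yields the claimed lower bound on $L_v^c$. The quadratic that produces the explicit root is obtained not in $L_v^c$ but in $\sigma_v^c=\gamma_f(\lambda_v^c)/\mu_v^c$: replacing $\tfrac{1}{\lambda_v^c}d_f(M_v^c)$ by $\tfrac{1}{\mu_v^c}$ (the paper records the standing assumption $\tfrac{1}{\lambda_v^c}d_f(M_v^c)\ge \tfrac{1}{\mu_v^c}$) turns the display into $\sigma_v^c(2-\sigma_v^c)/(1-\sigma_v^c)\ge \Hg(f_c(X_1^N))$, whose smaller root is exactly the expression in (\ref{FlowBounds}); the paper then reads off $\gamma_f(\lambda_v^c)\ge \mu_v^c\,L^{\mathrm{low}}$ and from there the bound on $L_v^c$. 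By contrast you drop the computation term first and try to land a quadratic in $L_v^c$ itself; your algebra on the quadratic and the monotonicity of $x(2-x)/(1-x)$ is fine, but the step ``insert the information-theoretic content to force $\tfrac{L_v^c(2-L_v^c)}{1-L_v^c}\ge \Hg(f_c(X_1^N))$'' is exactly the place you flagged as the obstacle, and indeed you have not produced it: nothing in your chain gives a lower bound on $\sigma_v^c$ (or on $L_v^c$) that reproduces those coefficients. The paper sidesteps this by taking $L_v^c\ge \Hg(f_c(X_1^N))$ as the primitive inequality and running the quadratic in $\sigma_v^c$, not in $L_v^c$.
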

	
	Prop. \ref{FlowResult} yields a better inner bound than that of Slepian and Wolf \cite{SlepWolf1973} because the LHS of (\ref{FlowBounds}) is always less than or equal to $\Hg(f_c(X_1^N))$. Its proof is provided in Appendix. 

	We next provide a result required for stability.
	\begin{prop}
	For stability, we require that $d_f(M_v^c)>M_v^c$.
	\end{prop}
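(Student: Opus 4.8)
\emph{Proof plan.} The plan is to treat the computation stage of node $v$ as its own subsystem, compare its per-packet delay against the delay the node would incur if it simply relayed the same packets, and invoke Little's law (as in \eqref{LittleComputation}) to see that stability of this stage is exactly the inequality $d_f(M_v^c)>M_v^c$.

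First I would observe that $M_v^c/\lambda_v^c$ is the time a batch of the $M_v^c$ type-$c$ packets present at $v$ would spend there under pure relaying, whereas by \eqref{delaycostperflow} the time they actually spend in the computation stage is $C_{v,comp}^c = d_f(M_v^c)/\lambda_v^c$. For this stage to be a stable, non-degenerate queue --- one that genuinely carries out the coloring and look-up work of evaluating a non-identity $f_c$, rather than passing packets through --- its occupancy time must strictly exceed the relay time, i.e. $d_f(M_v^c)/\lambda_v^c > M_v^c/\lambda_v^c$, which is $d_f(M_v^c) > M_v^c$. Applying Little's law to the stage makes the same point from the other side: its mean number of packets is $\lambda_v^c\,C_{v,comp}^c = d_f(M_v^c)$, which must sit strictly above the count $M_v^c$ of source packets still awaiting $f_c$, since evaluating a non-identity $f_c$ also holds intermediate state (partial colorings of the characteristic graph, look-up entries, and so on). The identity function, for which the stage is a pure relay and $d_f(M_v^c)=M_v^c$, is precisely the stability boundary, so any node that actually computes lies strictly inside it.

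As a consistency check I would substitute $W_v^c=C_{v,comp}^c+C_{v,comm}^c$ into $L_v^c=\gamma_f(\lambda_v^c)W_v^c$ and use the connection $M_v^c=L_v^c(1-\gamma_f(\lambda_v^c)/\lambda_v^c)$, the bound $L_v^c\le M_v^c$ of Prop.~\ref{FlowResult}, and the constraint $\rho_v^c<1$ from \eqref{costoptimization}: these already confine $\gamma_f(\lambda_v^c)$ to a range on which the communication term $1/(\mu_v^c-\gamma_f(\lambda_v^c))$ is finite, so the only extra requirement the joint system imposes is the computation-stage one, $d_f(M_v^c)>M_v^c$. The step I expect to be the crux is the \emph{strict} inequality: justifying $d_f(M_v^c)>M_v^c$ rather than merely $d_f(M_v^c)\ge M_v^c$ needs a careful account of what the computation stage contains (raw inputs versus partial colorings) and of the convention, implicit in \eqref{delaycostperflow}, that $d_f$ is exactly the total processing time; the entropic-surjectivity picture gives the intuition that $d_f$ is expansive, but pinning down that equality holds only at the identity function is the delicate part.
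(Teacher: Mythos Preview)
Your proposal misses the mechanism the paper actually uses. You frame the inequality as a property of the \emph{computation} stage in isolation: computing a non-identity $f_c$ ``must'' take longer than relaying, so $d_f(M_v^c)/\lambda_v^c>M_v^c/\lambda_v^c$. But that is not a stability statement; it is an assertion about what it means for computation to be non-trivial, and it does not explain why violating the inequality would make any queue blow up. Your Little's-law step has the same circularity: saying the mean occupancy $\lambda_v^c C_{v,comp}^c=d_f(M_v^c)$ must exceed $M_v^c$ presupposes the conclusion rather than deriving it from a rate comparison.

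The paper's argument instead looks at the \emph{tandem} of computation followed by communication. It compares $C_{v,comp}^c$ with $C_{v,comm}^c$: since $M_v^c=\lambda_v^c/(\mu_v^c-\lambda_v^c)$, the hypothesis $d_f(M_v^c)<M_v^c$ gives $C_{v,comp}^c=d_f(M_v^c)/\lambda_v^c<1/(\mu_v^c-\lambda_v^c)$, and with $\gamma_f(\lambda_v^c)\le\lambda_v^c$ this is linked to $C_{v,comm}^c=1/(\mu_v^c-\gamma_f(\lambda_v^c))$. The conclusion drawn is that when the computation delay is smaller than the communication delay, packets leave the computation stage faster than the communication stage can absorb them, so they accumulate ahead of the communication server and stationarity fails. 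That accumulation at the \emph{downstream} queue is the instability; your consistency check explicitly sets this aside (``the only extra requirement the joint system imposes is the computation-stage one''), which is exactly backwards relative to the paper. To repair your argument you would need to reinstate the comparison between $C_{v,comp}^c$ and $C_{v,comm}^c$ and explain why $C_{v,comp}^c<C_{v,comm}^c$ forces unbounded growth at the communication buffer.
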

	
	\begin{proof}
	Assume that $C_{v,comp}^c<C_{v,comm}^c$. We then have
	\begin{align}
	C_{v,comp}^c=\frac{1}{\lambda_v^c}d_f(M_v^c)\!<\!C_{v,comm}^c=\frac{1}{\mu_v^c-\gamma_f(\lambda_v^c)}\!<\!\frac{1}{\mu_v^c-\lambda_v^c},\nonumber
	\end{align}
	where the rightmost term is the total cost in the case of no computation. Hence, if $d_f(M_v^c)<M_v^c$, then we have $C_{v,comp}^c<C_{v,comm}^c$. In other words, the number of packets waiting for communications is higher than the number of packets waiting for computation. In this case, packets will accumulate while waiting for communication service, which will violate the stability condition. Hence, delay  of  computation should  be  higher, i.e. $d_f(M_v^c)>M_v^c$ is required.
	\end{proof}

	\begin{prop}{\bf Rate of generated flow.} \label{rateofcomputationflow1}
	The processing factor of node $v$ for type $c$ flow is given by
	\begin{align}
	\gamma_f(\lambda_v^c)\geq a_v^c\pm \sqrt{(a_v^c)^2-\lambda_v^c\mu_v^c}
	\end{align}
	where $2a_v^c= \lambda_v^c+\mu_v^c+\lambda_v^c/d_f(M_v^c)$. Hence,
	\begin{align}
	L_v^c&\geq (a_v^c- \sqrt{(a_v^c)^2-\lambda_v^c\mu_v^c})\nonumber\\ &\cdot\Big[\frac{1}{\lambda_v^c}d_f(M_v^c)+\frac{1}{\mu_v^c-(a_v^c- \sqrt{(a_v^c)^2-\lambda_v^c\mu_v^c})}\Big].\nonumber
	\end{align}
	\end{prop}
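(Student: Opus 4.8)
The plan is to convert Little's law~\eqref{LittleComputation} into an explicit quadratic equation for the processing factor $\gamma:=\gamma_f(\lambda_v^c)$, and then read the bound off that quadratic.

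First I would substitute the delay costs~\eqref{delaycostperflow} into $L_v^c=\gamma\,W_v^c$, i.e.\ write $W_v^c=\frac{1}{\lambda_v^c}d_f(M_v^c)+\frac{1}{\mu_v^c-\gamma}$, to obtain
\[ L_v^c=\gamma\Big(\frac{d_f(M_v^c)}{\lambda_v^c}+\frac{1}{\mu_v^c-\gamma}\Big). \]
Multiplying through by $\lambda_v^c(\mu_v^c-\gamma)$ --- legitimate because the stability constraint $\rho_v^c<1$ together with positivity of $C_{v,comm}^c$ forces $\gamma<\mu_v^c$ --- and collecting powers of $\gamma$ gives
\[ d_f(M_v^c)\,\gamma^2-\big(\mu_v^c\,d_f(M_v^c)+\lambda_v^c+L_v^c\lambda_v^c\big)\,\gamma+L_v^c\lambda_v^c\mu_v^c=0. \]

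Second, I would inject the flow bounds. Proposition~\ref{FlowResult} gives $L_v^c\le M_v^c$, while the preceding stability result gives $d_f(M_v^c)>M_v^c$; together these let me compare the $L_v^c$-dependent coefficients above with those obtained by the formal replacement $L_v^c\to d_f(M_v^c)$, whose error has a definite sign, so the relation reduces to a one-sided quadratic inequality governed by $\gamma^2-2a_v^c\gamma+\lambda_v^c\mu_v^c$, where $2a_v^c=\lambda_v^c+\mu_v^c+\lambda_v^c/d_f(M_v^c)$. Since $a_v^c>\tfrac{1}{2}(\lambda_v^c+\mu_v^c)\ge\sqrt{\lambda_v^c\mu_v^c}$ by AM--GM, the discriminant $(a_v^c)^2-\lambda_v^c\mu_v^c$ is strictly positive, the two roots $a_v^c\pm\sqrt{(a_v^c)^2-\lambda_v^c\mu_v^c}$ are real and nonnegative, and solving the quadratic yields $\gamma_f(\lambda_v^c)\ge a_v^c\pm\sqrt{(a_v^c)^2-\lambda_v^c\mu_v^c}$.

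Third, for the displayed lower bound on $L_v^c$: the map $\gamma\mapsto\gamma\big(\frac{d_f(M_v^c)}{\lambda_v^c}+\frac{1}{\mu_v^c-\gamma}\big)$ is strictly increasing on $[0,\mu_v^c)$, and $2a_v^c>\lambda_v^c+\mu_v^c$ forces $a_v^c+\sqrt{(a_v^c)^2-\lambda_v^c\mu_v^c}>\mu_v^c$, so the only admissible root is $\gamma_0:=a_v^c-\sqrt{(a_v^c)^2-\lambda_v^c\mu_v^c}$, which lies in $[0,\lambda_v^c)\subset[0,\mu_v^c)$. Plugging $\gamma\ge\gamma_0$ back into $L_v^c=\gamma\big(\frac{d_f(M_v^c)}{\lambda_v^c}+\frac{1}{\mu_v^c-\gamma}\big)$ and using this monotonicity reproduces the stated expression.

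The step I expect to be the main obstacle is the self-referential structure of the relation: $d_f(M_v^c)$ depends on $M_v^c$, which is itself tied to $L_v^c$ via $M_v^c=L_v^c(1-\gamma/\lambda_v^c)$ and hence to $\gamma$, so the ``quadratic in $\gamma$'' silently carries $\gamma$ inside its coefficients through $d_f(M_v^c)$. Keeping the substitution of the flow bounds and the stability bound consistent in sign, and verifying that the chosen root stays feasible throughout, is the delicate part; the rest is the quadratic formula together with a short monotonicity argument.
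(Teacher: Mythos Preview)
Your overall scaffolding—write Little's law as a quadratic in $\gamma$, bound the $L_v^c$-dependent coefficients, solve—matches the paper. The gap is in the direction of the key inequality you feed into the quadratic.

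You invoke Proposition~\ref{FlowResult} ($L_v^c\le M_v^c$) together with the stability result ($d_f(M_v^c)>M_v^c$) to conclude $L_v^c<d_f(M_v^c)$. But trace what that does to the sign. Grouping the exact relation as
\[
d_f(M_v^c)\,\gamma^2-(\mu_v^c d_f(M_v^c)+\lambda_v^c)\gamma+L_v^c\,\lambda_v^c(\mu_v^c-\gamma)=0,
\]
and then replacing $L_v^c$ by the larger quantity $d_f(M_v^c)$ (with $\mu_v^c-\gamma>0$) yields
\[
\gamma^2-2a_v^c\gamma+\lambda_v^c\mu_v^c\ \ge\ 0,
\]
i.e.\ $\gamma$ lies \emph{outside} the interval $[a_v^c-\sqrt{(a_v^c)^2-\lambda_v^c\mu_v^c},\,a_v^c+\sqrt{(a_v^c)^2-\lambda_v^c\mu_v^c}]$. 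You then correctly observe that the larger root exceeds $\mu_v^c$ and is infeasible; but that leaves $\gamma\le a_v^c-\sqrt{(a_v^c)^2-\lambda_v^c\mu_v^c}$, an \emph{upper} bound on $\gamma$, not the lower bound claimed in the proposition. Your monotonicity step then produces an upper bound on $L_v^c$ as well, again the wrong direction.

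The paper's proof does not use Propositions~\ref{FlowResult} and~2 here at all. It instead postulates the opposite inequality $d_f(M_v^c)\le L_v^c$ directly in~\eqref{delaycomputation} (as a design target for the computation), which flips the sign of the quadratic inequality and places $\gamma$ \emph{between} the two roots; the lower root then gives $\gamma\ge a_v^c-\sqrt{(a_v^c)^2-\lambda_v^c\mu_v^c}$ and the stated lower bound on $L_v^c$ follows. So the crucial input is $d_f(M_v^c)\le L_v^c$, not the pair of earlier propositions you cited.
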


	\begin{proof}
	Via computation, we aim to achieve $L_v^c=O(\sqrt{d_f(M_v^c)})$, $M_v^c=\frac{\lambda_v^c}{\mu_v^c(1-\rho_v^c)}$. From 
	Little's law (\ref{LittleComputation}) we have
	\begin{align}
	\label{delaycomputation}
	 d_f(M_v^c) \leq L_v^c= \gamma_f(\lambda_v^c) \Big[\frac{d_f(M_v^c)}{\lambda_v^c}+\frac{1}{\mu_v^c-\gamma_f(\lambda_v^c)}\Big].
	\end{align}
	Simplifying the above relation, we get:
	\begin{align}
	\hspace{-0.3cm}\gamma_f(\lambda_v^c)^2-(\lambda_v^c+\mu_v^c+\lambda_v^c/d_f(M_v^c) )\gamma_f(\lambda_v^c)+\lambda_v^c\mu_v^c\geq 0.\nonumber
	\end{align}
	Simplifying above we get the desired result.
	\end{proof}

	\begin{remark}
	From Prop. \ref{rateofcomputationflow1}, observe that as the time complexity $d_f(M_v^c)$ of computation increases, $a_v^c$ decreases and the generated flow amount decreases. Ignoring this principle, if the processed flow rate were increased with the time complexity of the function, then the cost for both computation and communications would increase together. However, the processing factor 
	can decrease with the time complexity, and the output rate may not be compressed below $\Hg(f_c(X))$. Hence, the value of the processed flow should satisfy 
	\begin{align}
	\Hg(X_1^N)\geq L_v^c\geq \Hg(f_c(X)).\nonumber
	\end{align}
	\end{remark}

\subsection{Entropic Surjectivity}
	
	In our context, entropic surjectivity is a measure of how well a network can compress a function that the destination wants to compute. Since non-surjective functions have low entropy, a function with high entropy yields a high entropic surjectivity. Hence, for surjective functions $\Hg(f_c(X))/H(X)\approx 1$ and it is not possible to do much further compression. 
	
	\begin{defi}{\bf Entropic surjectivity, $\Gamma_c(f)$.} 
Entropic surjectivity of a function is how well the function $f_c: X\to Y$ can be compressed with respect to the compression rate of its domain $X$. We denote the entropic surjectivity of function $f_c$ with respect to source symbols $X$ by
\begin{align}
\Gamma_c(f)=\Gamma_c={\Hg(f_c(X))}\big/{H(X)},
\end{align}
where we emphasize that $\Gamma_c$ is a function of the function $f_c$. 
\end{defi}	
Note that $\Gamma_c$ is maximized when the function $f$ with domain $X$ and codomain $Y$ is surjective, i.e. for every $y\in Y$ there exists at least one $x\in X$ with $f(x)=y$. Note also that $\Gamma_c$ is lower bounded by zero which is when the function maps all elements of $X$ to the same element of $Y$. Therefore, $\Gamma_c$ can be used a measure of how surjective the function $f_c$ is.
	
	Consider a function associated with class $c$, i.e. $f_c: X\to f_c(X)$. Total incoming flow rate needed  (bits/source symbol required) can be approximated as $H(X)$. However, to be able to compute $f_c(X)$, we need to transmit at least $\Hg(f_c(X))$ bits/source symbols. In this case, the proportion of flow that requires communications (which is the same as the proportion of flow that is generated as a result of computation task):
	\begin{align}
	\label{flow_entropy_relation}
	\Gamma_c\approx \gamma_f(\lambda_v^c)\big/\lambda_v^c.
	\end{align}	
	Our objective is to bound $\gamma_f(\lambda_v^c)$ using the connection between Little's law that connects the number of packets $L_v^c$ with the entropic surjectivity of the function $\Gamma_c$. Given the surjectivity, maximum amount of reduction in communications flow that can be handled is $H(X)-\Hg(f_c(X))=H(X)(1-\Gamma_c)$.				

\begin{defi}{\bf Set of computational flows.}
We denote the set of computational flows by $\mathcal{C} = \{c = (f, X) \}$ where $f\in\mathcal{F}$ which represents the class of functions and $X\in\mathcal{X}$ defined on the probability space $(\mathcal{X},\mathcal{P})$ where $\mathcal{X}$ is the set of symbols and $\mathcal{P}$ is the data (or source) distribution.
\end{defi}

\begin{defi}{\bf Processing (surjection) factor.} Processing factor of a node $v$ is the computational flow rate generated by $v$ as a result of computing $f_c$. It is given by $\gamma_f=\gamma_f(\lambda_v^c)\geq \lambda_v^c \Gamma_c$.
\end{defi}

In Figure \ref{flowpernode}, we illustrate different components of computational aspects at a typical node $v\in V$ of the network. Since the network is of Jackson type, we can consider a node in isolation. Note that the min-cut that denotes the total arrival rate of computational flow $c$ is given by $\lambda_v^c$. This via (\ref{relation_lambdav_lambdavout}) captures the rate of original arrivals which is $\beta_v^c$, and the arrivals routed from any other node $v'\in V$ in the network. If there is no $v'\in V$ such that $p_{v',v}^{rou}(c)>0$, then $\lambda_v^c=\beta_v^c$. The cut $\gamma_f(\lambda_v^c)$ denotes the total generated rate (or processing factor) of computational flow $c$ at node $v$. The processed flow can be routed to any $v''\in V$ in the network if $p_{v,v''}^{rou}(c)>0$. If there is no such node, then $\gamma_f(\lambda_v^c)$ departs the system.

\begin{prop}\label{LoadThreshold}{\bf Load threshold for distributed function computation.}
A node $v\in V$ can do computation of a class $c\in\mathcal{C}$ function if the following condition is satisfied:
\begin{align}
\frac{(\rho_v^c)^2}{1-\rho_v^c}\!>\! f_{\rm comp}^d(M_v^c)\frac{1-\rho_v^c{\Hg(f_c(X_1^N))}/{H(X_1^N)}}{1-{\Hg(f_c(X_1^N))}/{H(X_1^N)}}\!>\! d_f(M_v^c).\nonumber
\end{align}

A more relaxed threshold on computation is given by
\begin{align}
\label{relaxed_condition_computation}
\rho_v^c >\rho_{th}=\sqrt{\Big(\frac{d_f(M_v^c)}{2}\Big)^2+d_f(M_v^c)}-\frac{d_f(M_v^c)}{2}.
\end{align}
\end{prop}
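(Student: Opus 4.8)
I would start from the requirement, already used in the proof of Prop.~\ref{rateofcomputationflow1} (eq.~(\ref{delaycomputation})), that a node should run the computation only when its occupancy can absorb the computational workload, i.e. $d_f(M_v^c)\le L_v^c$. Into this I would substitute, in turn: Little's law (\ref{LittleComputation}), $L_v^c=\gamma_f(\lambda_v^c)\,W_v^c$, with $W_v^c$ expanded via (\ref{delaycostperflow}); the entropic-surjectivity identity $\gamma_f(\lambda_v^c)=\lambda_v^c\Gamma_c$ with $\Gamma_c=\Hg(f_c(X_1^N))/H(X_1^N)$ from (\ref{flow_entropy_relation}); and the $M/M/1$ occupancy identity $M_v^c=\lambda_v^c/(\mu_v^c(1-\rho_v^c))=\rho_v^c/(1-\rho_v^c)$ invoked in Prop.~\ref{rateofcomputationflow1}. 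Since $0<\rho_v^c<1$ (stability) and $0\le\Gamma_c<1$, every denominator that appears is positive, so I may clear them freely; collecting terms should reduce $d_f(M_v^c)\le L_v^c$ to the first inequality in the statement,
\begin{align}
\frac{(\rho_v^c)^2}{1-\rho_v^c}>f_{\rm comp}^d(M_v^c)\,\frac{1-\rho_v^c\Hg(f_c(X_1^N))/H(X_1^N)}{1-\Hg(f_c(X_1^N))/H(X_1^N)},\nonumber
\end{align}
where $f_{\rm comp}^d(M_v^c)$ is the non-decreasing computation-cost term produced by the substitution, and strictness comes from the strict stability margins $\rho_v^c<1$ and $\gamma_f(\lambda_v^c)<\mu_v^c$.

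The second inequality, $f_{\rm comp}^d(M_v^c)\frac{1-\rho_v^c\Gamma_c}{1-\Gamma_c}>d_f(M_v^c)$, is the easy half: $\rho_v^c<1$ forces $\rho_v^c\Gamma_c<\Gamma_c$, hence $1-\rho_v^c\Gamma_c>1-\Gamma_c>0$, so the fraction exceeds $1$, and $f_{\rm comp}^d(M_v^c)\ge d_f(M_v^c)$. Chaining the two inequalities then collapses the condition into the scalar inequality $\frac{(\rho_v^c)^2}{1-\rho_v^c}>d_f(M_v^c)$, which is what drives the relaxed threshold.

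For the relaxed threshold I would note that $\phi(x)=x^2/(1-x)$ is continuous and strictly increasing on $[0,1)$, with $\phi(0)=0$ and $\phi(x)\to\infty$ as $x\to1^-$, so $\phi(\rho_v^c)>d_f(M_v^c)$ is equivalent to $\rho_v^c>\phi^{-1}(d_f(M_v^c))$. Solving $(\rho_v^c)^2=d_f(M_v^c)(1-\rho_v^c)$, i.e. the quadratic $(\rho_v^c)^2+d_f(M_v^c)\rho_v^c-d_f(M_v^c)=0$, and keeping its unique positive root gives $\phi^{-1}(d_f(M_v^c))=\sqrt{(d_f(M_v^c)/2)^2+d_f(M_v^c)}-d_f(M_v^c)/2=\rho_{th}$, which lies in $(0,1)$ whenever $d_f(M_v^c)>0$; this is exactly (\ref{relaxed_condition_computation}), and monotonicity of $\phi$ turns the strict inequality into $\rho_v^c>\rho_{th}$.

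The main obstacle is the first step. To land \emph{precisely} on the stated first inequality one must be careful about which of the several partly-redundant relations among $L_v^c$, $M_v^c$, $\gamma_f(\lambda_v^c)$, $\lambda_v^c$, $\mu_v^c$ is used and in what order (in particular whether one invokes $M_v^c=\rho_v^c/(1-\rho_v^c)$, the split $M_v^c=L_v^c(1-\gamma_f(\lambda_v^c)/\lambda_v^c)$, or both), and about tracking the exact form of the computation-cost term so that it matches $f_{\rm comp}^d(M_v^c)$. Once that normalization is pinned down, the remaining steps — the elementary second inequality and the quadratic solve for $\rho_{th}$ — are routine.
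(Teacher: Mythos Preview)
Your handling of the second inequality and of the relaxed threshold is fine: once one has
\[
\frac{(\rho_v^c)^2}{1-\rho_v^c}>d_f(M_v^c)\,\frac{1-\rho_v^c\Gamma_c}{1-\Gamma_c},
\]
the bound $\frac{1-\rho_v^c\Gamma_c}{1-\Gamma_c}>1$ (from $\rho_v^c<1$) and the monotone quadratic solve for $\rho_{th}$ are exactly right. (Also, $f_{\rm comp}^d(M_v^c)$ in the statement is simply $d_f(M_v^c)$; it is a notational slip, not a separate quantity, so you need not worry about showing $f_{\rm comp}^d\ge d_f$.)

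The gap is in the first step. The paper does \emph{not} start from $d_f(M_v^c)\le L_v^c$; it starts from the decision rule ``compute only if the delay with computation is smaller than the delay without computation,'' namely
\[
\frac{1}{\mu_v^c(1-\rho_v^c)}\;>\;\frac{d_f(M_v^c)}{\lambda_v^c}+\frac{1}{\mu_v^c-\gamma_f(\lambda_v^c)},
\]
with $\gamma_f(\lambda_v^c)=\lambda_v^c\Gamma_c$ substituted on the right. Subtracting the two communication terms and clearing $\mu_v^c$ gives
\[
\frac{\rho_v^c(1-\Gamma_c)}{(1-\rho_v^c)(1-\rho_v^c\Gamma_c)}>\frac{d_f(M_v^c)}{\rho_v^c},
\]
which rearranges directly to $\frac{(\rho_v^c)^2}{1-\rho_v^c}>d_f(M_v^c)\,\frac{1-\rho_v^c\Gamma_c}{1-\Gamma_c}$. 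The factor $(1-\rho_v^c)$ in the denominator comes from the \emph{no-computation} delay $\frac{1}{\mu_v^c(1-\rho_v^c)}$, and the $(\rho_v^c)^2$ from the product of the $\rho_v^c$ produced by the difference of the two communication delays and the $\lambda_v^c$ multiplying $1/\lambda_v^c$ in the computation term. Your route via $d_f(M_v^c)\le L_v^c=\gamma_f W_v^c$ never introduces the no-computation baseline, so the algebra cannot produce $(\rho_v^c)^2/(1-\rho_v^c)$: carrying it through yields $d_f(M_v^c)(1-\Gamma_c)\le \rho_v^c\Gamma_c/(1-\rho_v^c\Gamma_c)$, which is a different (and weaker-looking) condition. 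Replace your starting inequality with the delay comparison above and the rest of your argument goes through unchanged.
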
	

\begin{proof}
The threshold $\rho_{th}$ is obtained by comparing the total delay in (\ref{delaycostperflow}) with and without computation, on a per node basis. 
If the delay 
caused only by communication is higher than the total delay caused by computation followed by communication $W_v^c$ in (\ref{delaycostperflow}), i.e. the following condition is satisfied at node $v$, the node decides that computation is required:
\begin{align}
\label{computation_cost_lower}
\frac{1}{\mu_v^c(1-\rho_v^c)} > \frac{1}{\lambda_v^c}d_f(M_v^c)+\frac{1}{\mu_v^c\left(1-\rho_v^c\Gamma_c\right)},
\end{align}
where we used $\gamma_f(\lambda_v^c)=\lambda_v^c\Gamma_c$ on RHS. From (\ref{computation_cost_lower}), we have
\begin{align}
\label{gamma_vs_d_relation}
\frac{(\rho_v^c)^2}{1-\rho_v^c}> d_f(M_v^c)\frac{1-\rho_v^c\Gamma_c}{1-\Gamma_c}>d_f(M_v^c).
\end{align}
Using $\frac{(\rho_v^c)^2}{1-\rho_v^c}>d_f(M_v^c)$, we get the relaxed condition for computation in (\ref{relaxed_condition_computation}). From (\ref{gamma_vs_d_relation}) observe $\rho_{th}\to 1$ as $M_v^c\to\infty$.
\end{proof}

\begin{figure}[t!]
\centering
\includegraphics[width=\columnwidth]{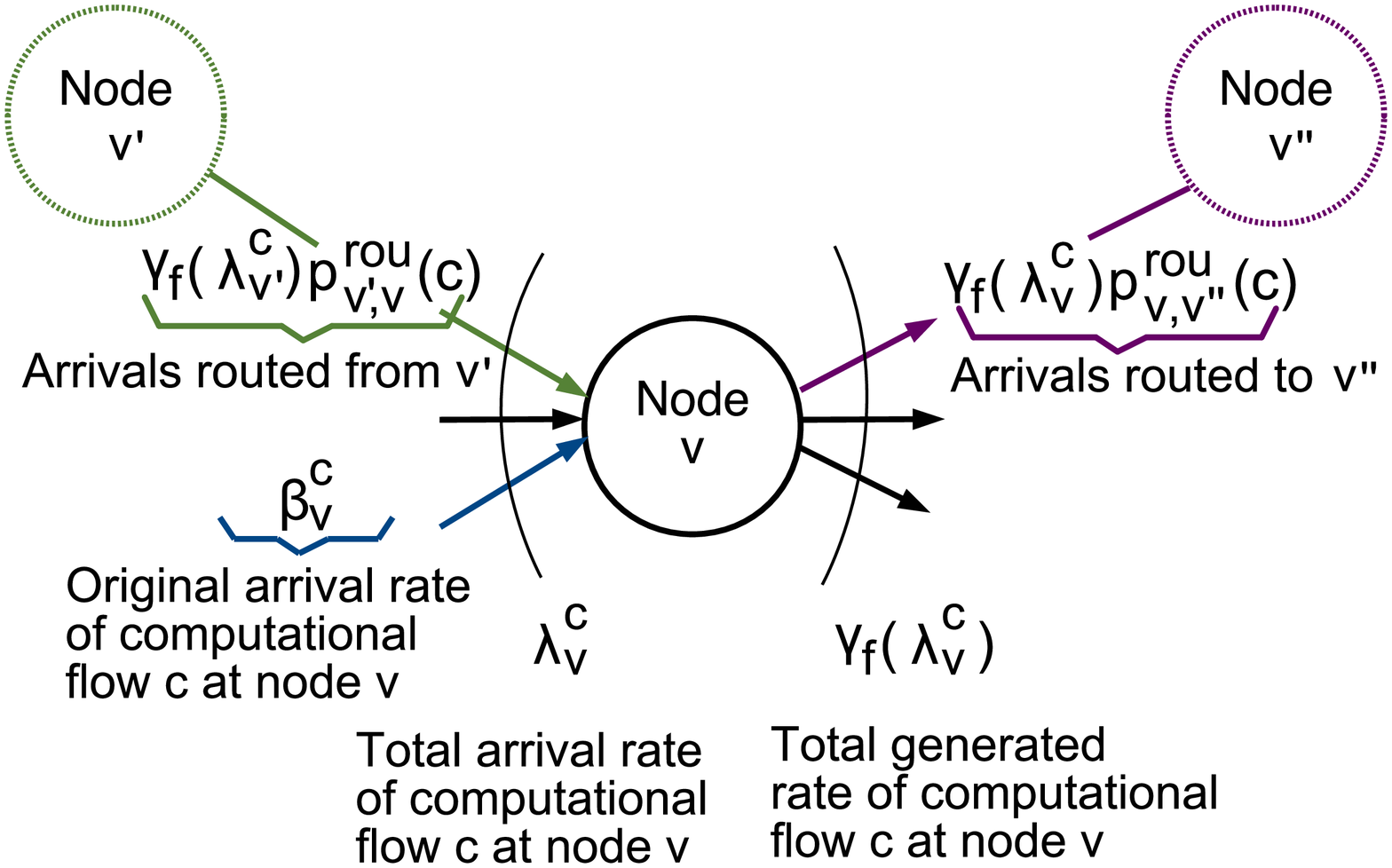}
\caption{Illustration of computational flow at node $v\in V$ where min cut $\lambda_v^c$ denotes the total arrival rate of computational flow $c$ at node $v$ that incorporates the original arrivals $\beta_v^c$ and the arrivals routed from $v'\in V$. Min cut $\gamma_f(\lambda_v^c)$ denotes the total generated rate of $c$ at $v$. Nodes $v',v''\in V$ represent the nodes where arrivals routed from/to.}
\label{flowpernode}
\end{figure} 
 	
We consider three different function categories and the time complexity of these. For search function that tries to locate an element in a sorted array, an algorithm runs in logarithmic time, which has low complexity. For MapReduce (or linear reduce) function, since the reduce functions of interest are linear, the algorithm runs in linear time, which is of medium complexity. For classification function, we can consider the set of all decision problems that have exponential runtime, which is of high complexity. 
The time complexity, i.e. the order of the count of operations, of these functions satisfies:
\begin{align}
\label{time_complexity_special_functions}
d_f(M_v^c)=\begin{cases}
O(\log(M_v^c)),\quad&\text{Search},\\
O(M_v^c),\quad&\text{MapReduce},\\
O(\exp(M_v^c)),\quad&\text{Classification}.\\
\end{cases}
\end{align}
In Sect. \ref{performance}, we evaluate and contrast $\rho_{th}$ for above functions. 

\subsection{Routing for Computing}

	We assume an open network and that the arrival rate of class $c$ packets to the system is Poisson with rate $\beta^c$. Let $p_v^{arr}(c)$ be the probability that an arriving class $c$ packet is routed to queue $v$. Assuming that all arriving packets are assigned to a queue, we have that $\sum\nolimits_{v\in V}p_v^{arr}(c)=1$.
	
	For tractability, we consider the behavior of each node in isolation. This is allowed given that the network is quasi-reversible or product form \cite{walrand1983probabilistic}. For example, a Jackson network exhibits this behavior. With this, we assume a Markov routing policy \cite[Ch. 10.6.2]{nelson2013probability} which can be described as follows. As a result of function computation, packets might have different classes, and routing probabilities depend on a packet's class. However, we assume that packets do not change their class when routed from one node to another. 
	Let $p_{v,v'}^{rou}(c)$ be the probability that a class $c$ packet that finishes service at node $v$ is routed to node $v'$. The probability that a class $c$ packet departs from the network after service completion at node $v$ is given by $p_v^{dep}(c)=1-\sum\nolimits_{v'\in V}p_{v,v'}^{rou}(c)$, where the second term on the RHS denotes the total probability that the packets stay in the network. Since it is an open network model, 
	for every class $c$ there is at least one value of $v$ so that $p_v^{dep}(c) > 0$. Thus all packets eventually leave the system. 
	
	For simplicity, assume that conversion among classes is not possible\footnote{In general packets can change their class when routed from one node to another \cite{nelson2013probability}. The study of the multi-class generalization is left as future work. 
	}. Then the total arrival rate of class $c$ packets to $v$ is 
\begin{align}
	\label{relation_lambdav_lambdavout}
	\lambda_v^c = \beta_v^c + \sum\nolimits_{v'\in V}\gamma_f(\lambda_{v'}^{c})p_{v',v}^{rou}(c),	\end{align}
	where $\beta_v^c=\beta^c p_v^{arr}(c)$, and the first term on the RHS denotes the original arrival rate of class $c$ packets that are assigned to node $v$, and the second term on the RHS denotes the arrival rate of class $c$ packets that are routed to node $v$ after finishing service at other nodes $v'\in V$. Note that the term $\gamma_f(\lambda_{v'}^{c})$ denotes the total departure rate of class $c$ packets from node $v'$ (as a result of computation). Furthermore, the total departure rate of class $c$ packets from $v$ in the forward process is given by $\lambda_v^c p_v^{dep}(c)$. Let $p_{v}^{rou}=[p_{v,v'}^{rou}(c)]_{v'\in V,\,c\in\mathcal{C}}\in\mathbb{R}^{V\times C}$. 
	
	Advantages of having a Jackson type network as in (\ref{relation_lambdav_lambdavout}) are such that nodes can be considered in isolation. Each node needs to know how much it needs to manage, which is less complicated than when nodes need the topological information to determine how to manage individual computational flows.
	
\begin{figure}[t!]
\centering
\includegraphics[width=0.32\textwidth]{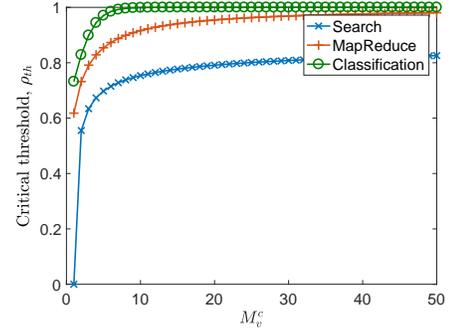}
\caption{The critical threshold $\rho_{th}$ for computation versus $M_v^c$ .}  
\label{thresholdisolated}
\end{figure}

\subsection{Solution to MinCost Problem}	
Using (\ref{relation_lambdav_lambdavout}), we rewrite the MinCost formulation in (\ref{costoptimization}) as
	\begin{equation}
	\label{costoptimization_expanded}
	\begin{aligned}
	\hspace{-0.6cm}{\rm MinCost}:\!\! & \underset{\gamma_f<\lambda<\mu}{\min}\!\!
	& & \!\!\!\!C \!=\! \sum\limits_{v\in V} \sum\limits_{c\in\mathcal{C}} \frac{d_f(M_v^c)}{\lambda_v^c}\!+\!\frac{1}{\mu_v^c-\gamma_f(\lambda_v^c)}\\	
	& \hspace{0.3cm}\text{s.t.}
	& & \lambda_v^c< \mu_v^c, \quad \forall c\in \mathcal{C},\,\, v\in V,
	\end{aligned}
	\end{equation}	
	where we assume that $p_{v}^{rou}(c)$, $\beta_v^c=\beta^c p_v^{arr}(c)$ are known apriori, and from (\ref{flow_entropy_relation}), $\gamma_f(\lambda_v^c)$ satisfies $\forall c\in \mathcal{C},\,\, v\in V$ that
	\begin{align}
	\label{flow_entropy_relation_expanded}
	\gamma_f(\lambda_v^c)
	\geq \Big[\beta_v^c + \sum\limits_{v'\in V}\gamma_f(\lambda_{v'}^{c})p_{v',v}^{rou}(c)\Big]\Gamma_c,
	\end{align} 
	which follows from (\ref{relation_lambdav_lambdavout}). We rewrite (\ref{flow_entropy_relation_expanded}) in vector form:
	\begin{align}
	\label{flow_entropy_relation_vector}
	{\bm \gamma}_f({\bm \lambda}^c) \geq {\bm \gamma}_{f,\, LB}({\bm \lambda}^c)=\left[{\bm \beta}^c+P^{rou}(c)  {\bm \gamma}_f({\bm \lambda}^c)\right]\Gamma_c, 
	\end{align}
	where $P^{rou}(c)=[p_{v',v}^{rou}(c)]_{v'\in V,v\in V}$, ${\bm \beta}^c=[\beta_v^c]_{v\in V}\in\mathbb{R}^{V\times 1}$, and ${\bm \lambda}^c=[\lambda_v^c]_{v\in V}\in\mathbb{R}^{V\times 1}$. Using the vector notation ${\bm \gamma}_{f}({\bm \lambda}^c)=[\gamma_f(\lambda_v^c)]_{v\in V}\in\mathbb{R}^{V\times 1}$, and from  (\ref{flow_entropy_relation_vector}) we obtain	
	\begin{align}
	\label{gamma_lower_bound}
		{\bm \gamma}_{f,\, LB}(\lambda^c)=\left(I-P^{rou}(c)\Gamma_c\right)^{-1}{\bm \beta}^c \Gamma_c, \quad \forall c\in \mathcal{C},
		\end{align}
		where $I$ is an $V\times V$ identity matrix.
	To guarantee that ${\bm \gamma}_{f}({\bm \lambda}^c)<{\bm \lambda}^c$, we can use the above condition in (\ref{gamma_lower_bound}). Hence, a necessary condition for the external arrival rate ${\bm \beta}^c$ for the computation operation to be effective is given by
	\begin{align}
	{\bm \beta}^c &<\left(I-P^{rou}(c)\Gamma_c\right) {\bm \lambda}^c /\Gamma_c.\nonumber
	\end{align}
	
\begin{figure*}[t!]	
	\centering
	\includegraphics[width=0.245\textwidth]{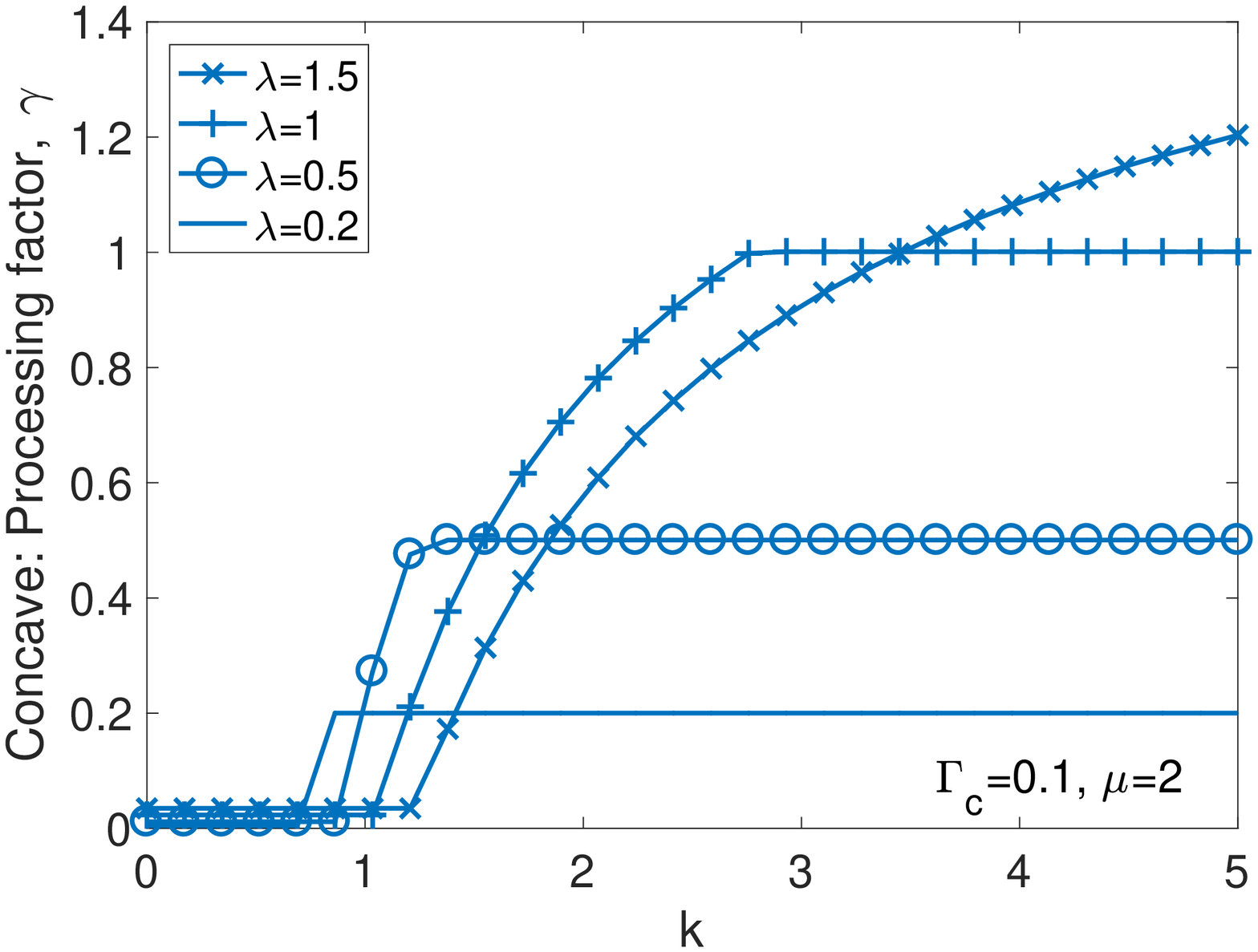}
	\includegraphics[width=0.245\textwidth]{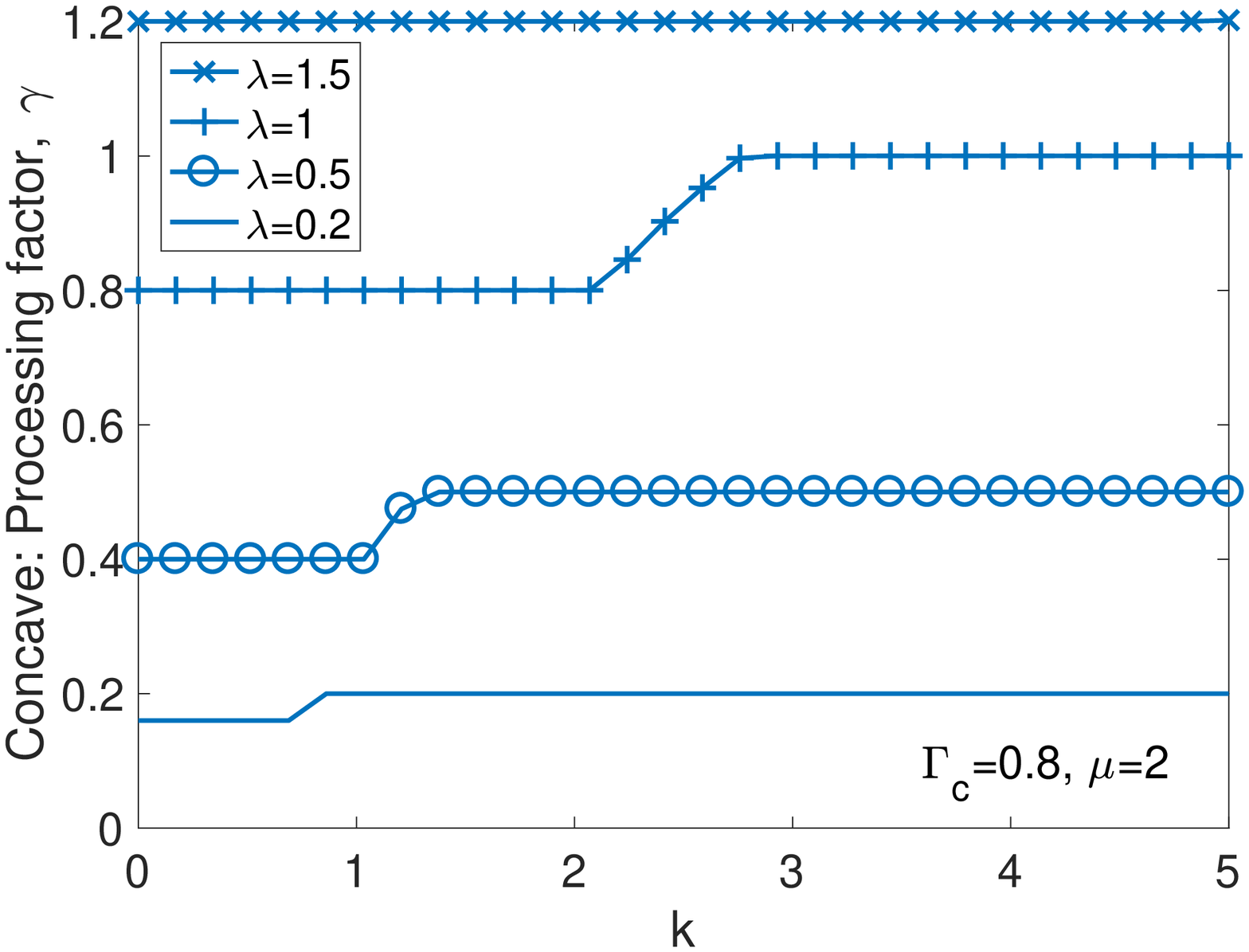}
	\includegraphics[width=0.245\textwidth]{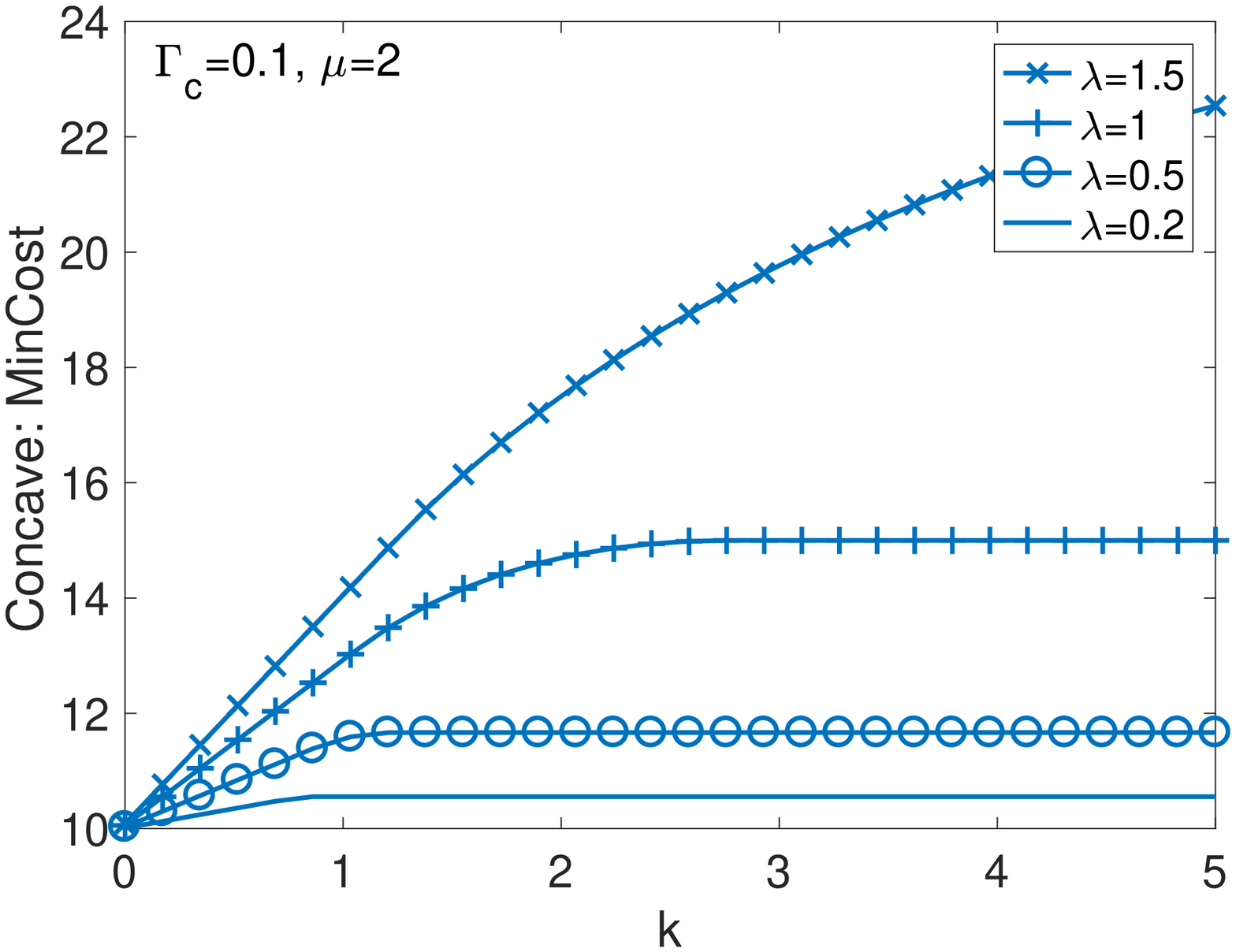}
	\includegraphics[width=0.245\textwidth]{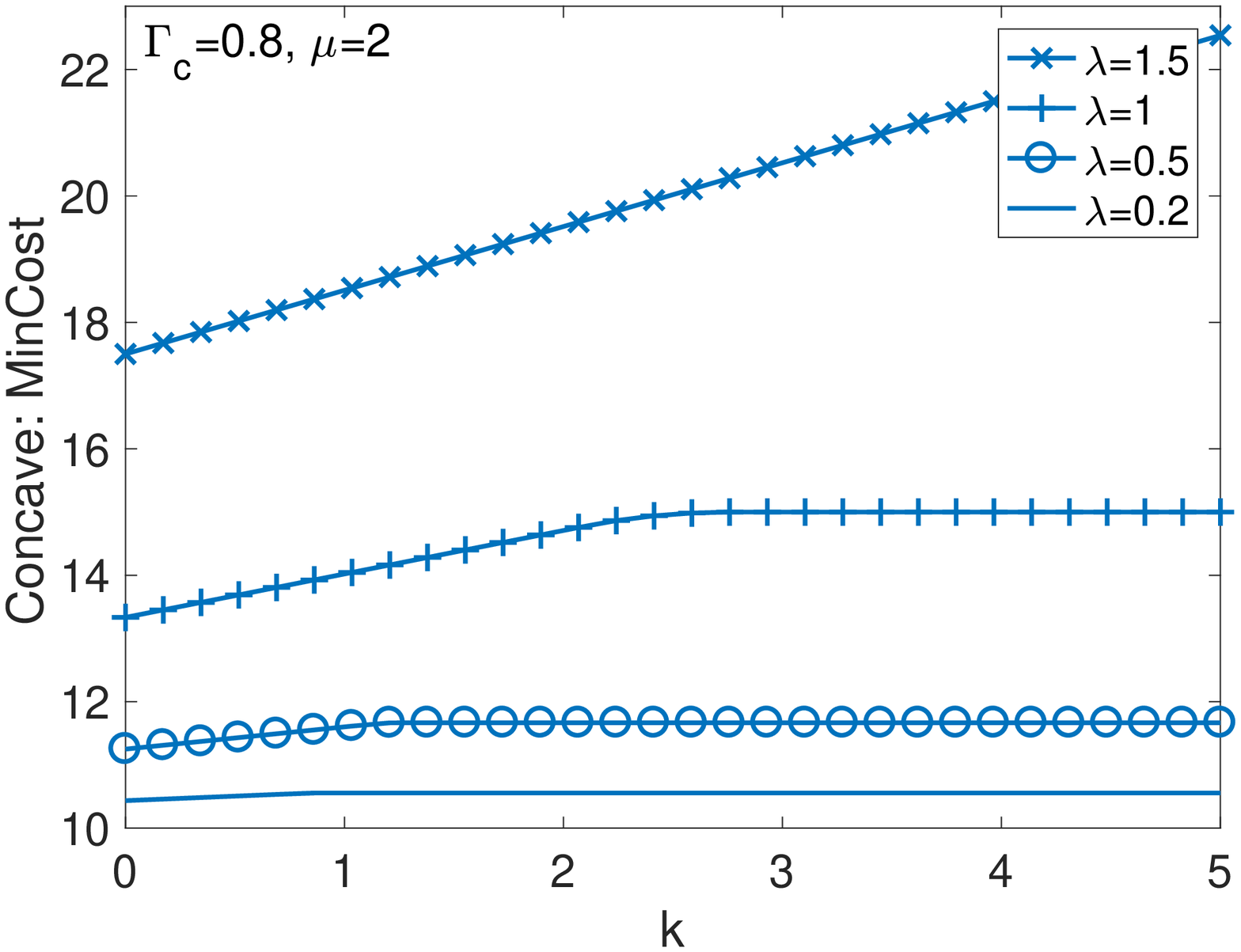}
	\caption{Concave computation cost. (L) Processing factor $\gamma_f(\lambda_v^c)$ versus computation cost scaling factor $k$. (R) MinCost versus $k$.}
	\label{Concave_MinCost_k}
	\end{figure*}
	
	\begin{figure*}[t!]	
	\centering
	\includegraphics[width=0.245\textwidth]{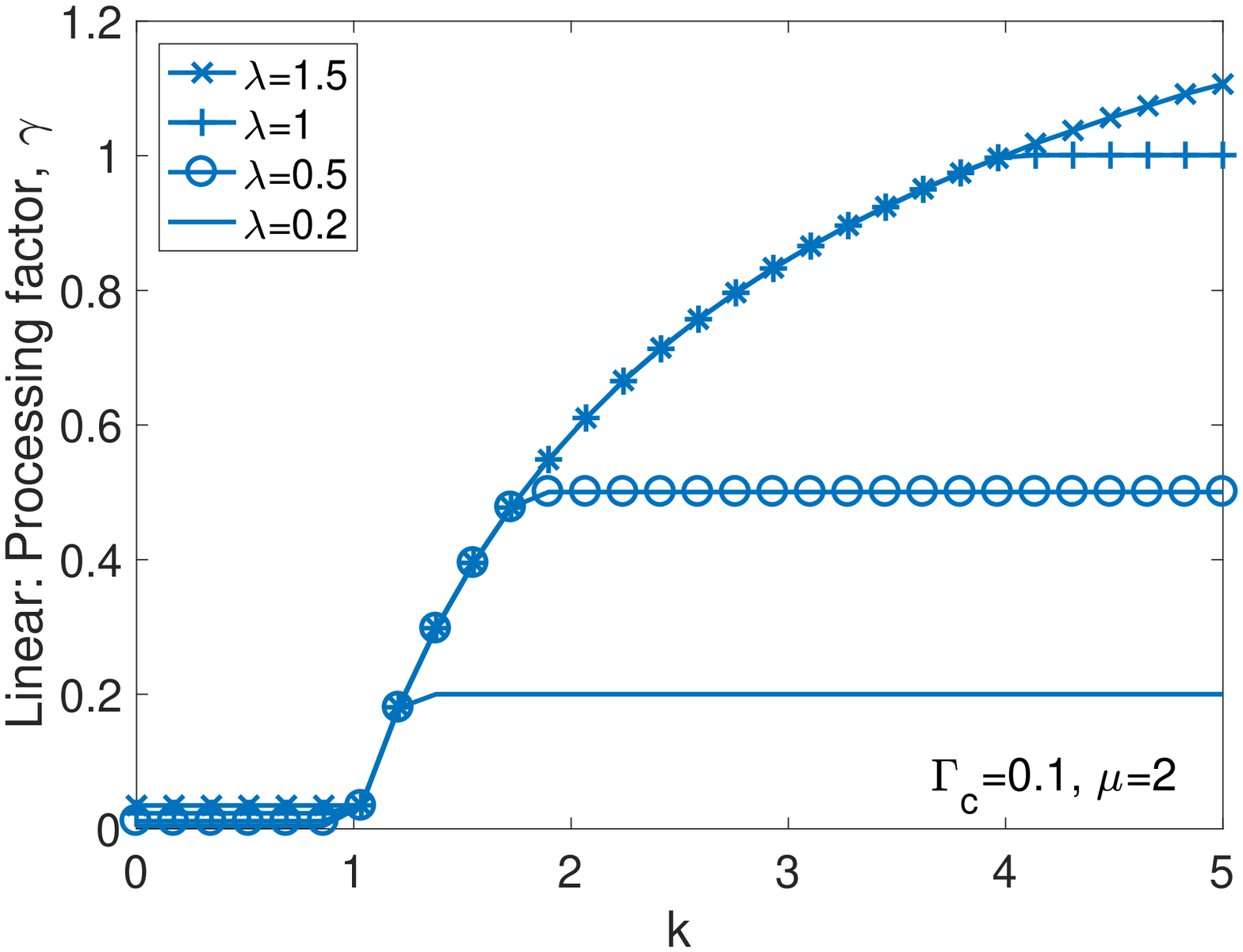}
	\includegraphics[width=0.245\textwidth]{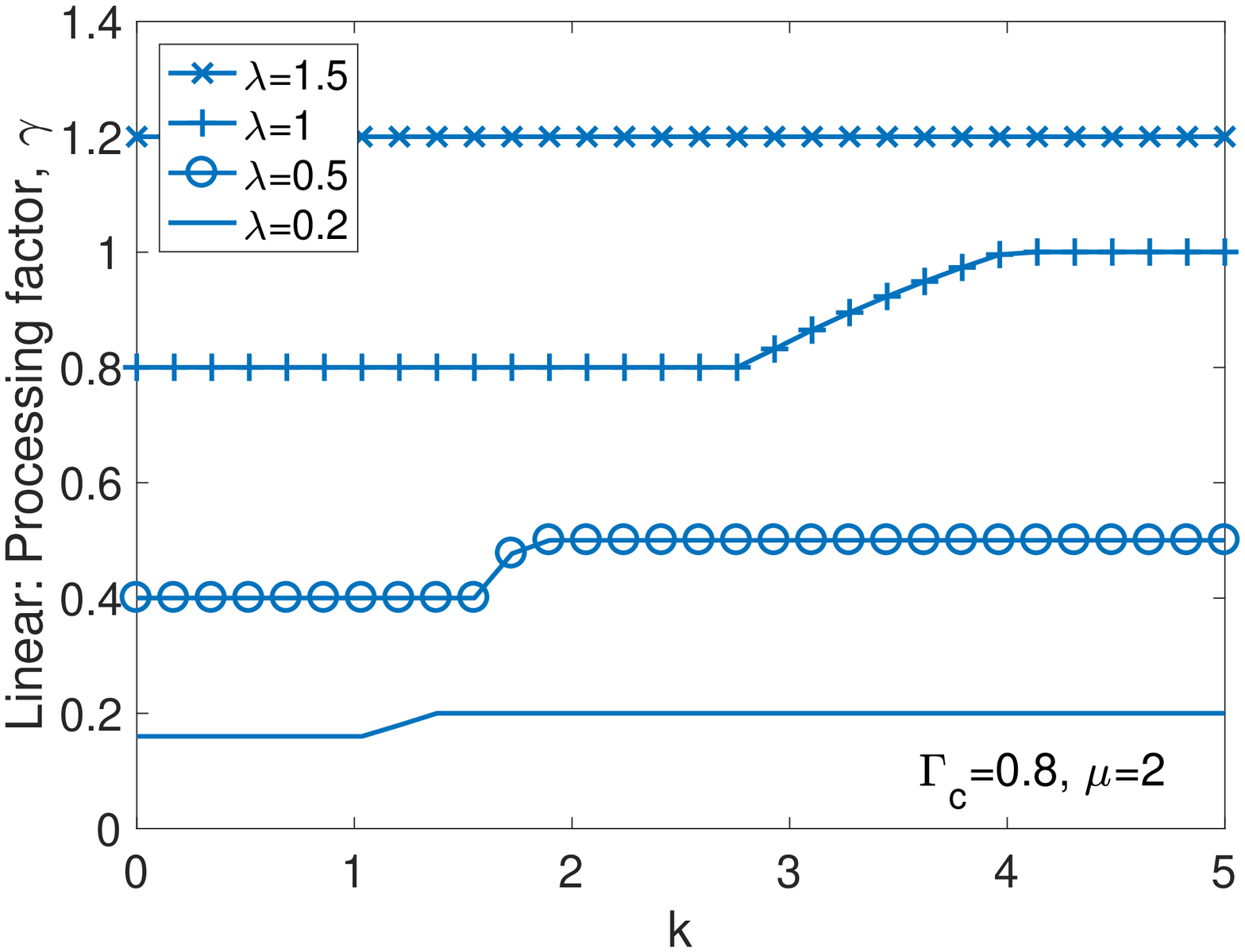}
	\includegraphics[width=0.245\textwidth]{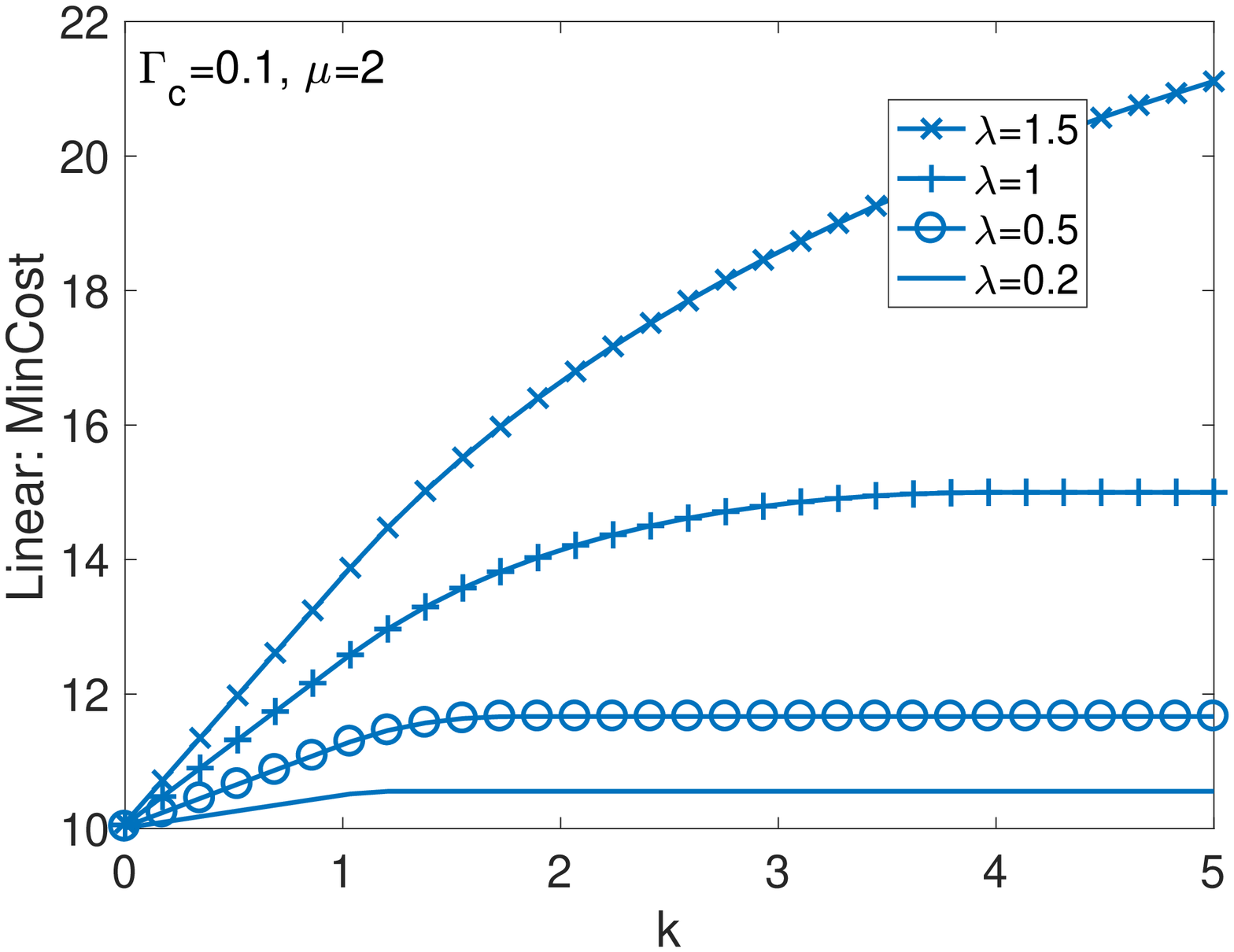}
	\includegraphics[width=0.245\textwidth]{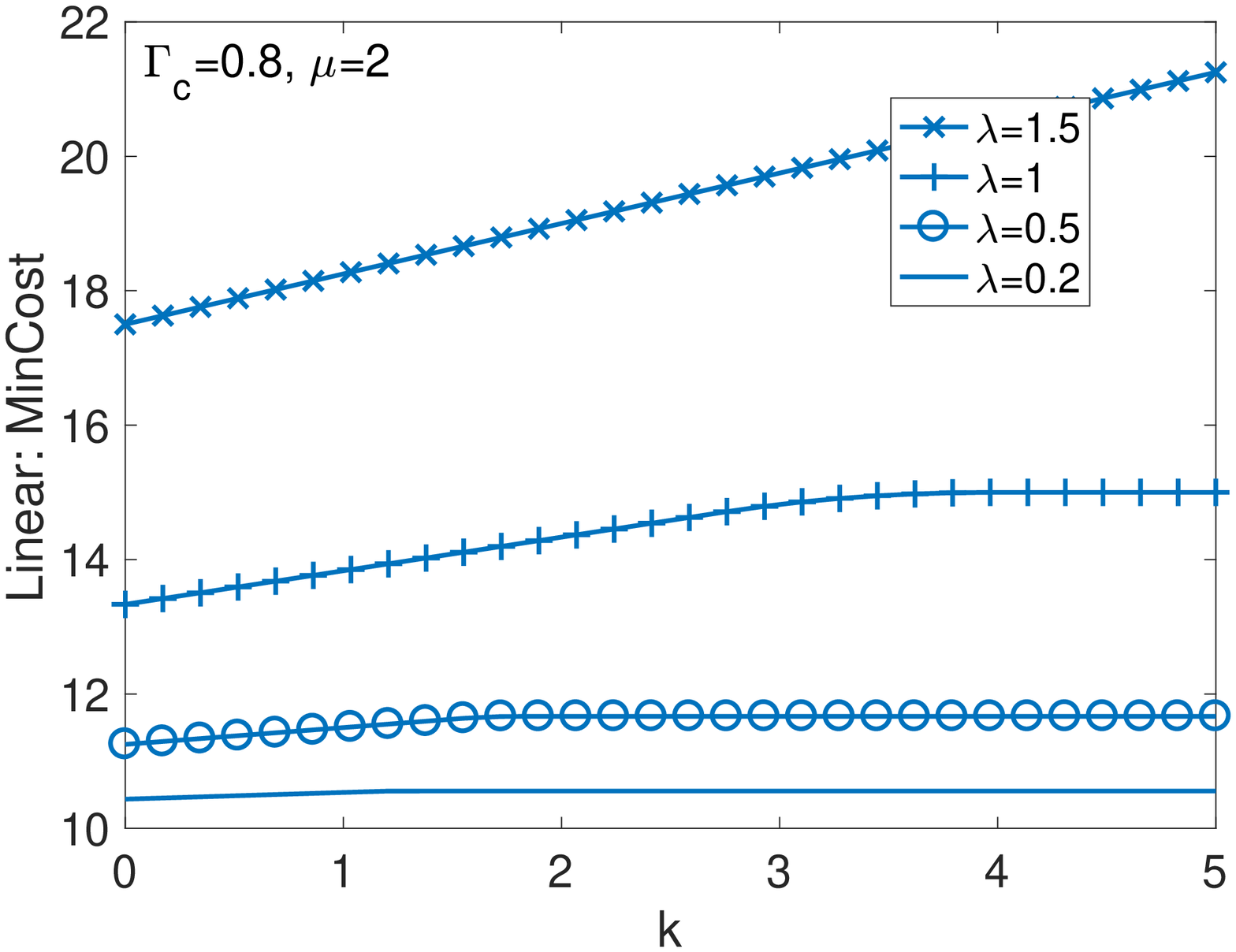}
	\caption{Linear computation cost. (L) Processing factor $\gamma_f(\lambda_v^c)$ versus computation cost scaling factor $k$. (R) MinCost versus $k$.}
	\label{Linear_MinCost_k}
	\end{figure*}	

The range of ${\bm \lambda}^c$ can be computed as function of $\Gamma_c$ as
	\begin{align}
	\label{lambda_interval}
	(I-P^{rou}(c)\Gamma_c)^{-1} {\bm \beta}^c \leq {\bm \lambda}^c<(I-P^{rou}(c))^{-1}{\bm \beta}^c.
	\end{align}	
	Note that $\Gamma_c$ is determined by the function's surjectivity hence is is affected by the computation cost of the function. Given $\Gamma_c$, (\ref{gamma_lower_bound}) gives a lower bound on ${\bm \gamma}_{f}({\bm \lambda}^c)$. Furthermore, using the relation ${\bm \lambda}^c={\bm \beta}^c+P^{rou}(c)  {\bm \gamma}_f({\bm \lambda}^c)<{\bm \beta}^c+P^{rou}(c)  {\bm \lambda}^c$ in (\ref{relation_lambdav_lambdavout}), we  equivalently require that ${\bm \lambda}^c<(I-P^{rou}(c))^{-1}{\bm \beta}^c$, yielding the upper bound in (\ref{lambda_interval}). Furthermore, we have ${\bm \lambda}^c={\bm \beta}^c+P^{rou}(c)  {\bm \gamma}_f({\bm \lambda}^c)\geq {\bm \beta}^c+P^{rou}(c)  {\bm \lambda}^c\Gamma_c$, and hence, ${\bm \lambda}^c\geq (I-P^{rou}(c)\Gamma_c)^{-1} {\bm \beta}^c$, yielding the lower bound in (\ref{lambda_interval}). 
	
	In (\ref{costoptimization_expanded}), the time complexity of classes, i.e. $d_f(M_v^c)$, is known. Observe that $C_{v,comp}^c$ decreases in $\gamma_f(\lambda_{v}^{c})$, and $C_{v,comm}^c$ increases in $\gamma_f(\lambda_{v}^{c})$. Note also that due to (\ref{flow_entropy_relation_expanded}) the values of $\lambda_{v}^{c}$ should be jointly optimized to minimize $C$.

In (\ref{costoptimization}), given the cost functions $C_{v,comm}^c$ and $C_{v,comp}^c$, we can solve for the optimal values of $\gamma_f(\lambda_v^c)$, $v\in V$, $c\in\mathcal{C}$ that minimize the MinCost problem. Then, using the entropic surjectivity relation in (\ref{flow_entropy_relation}), and by mapping the surjectivity to the class of functions, we can infer the type of flows (i.e. functions) that we can compute effectively. Using the order of the count of operations given in (\ref{time_complexity_special_functions}), we model the delay cost functions for computations of different classes of functions:  
	\begin{enumerate}
		\item Search (concave): \!$C_{v,comp}^c\!\!=\!\! \frac{1}{\mu_v^c}\Big(1\!+k\log\big(1\!+\!\frac{\lambda_v^c-\gamma_f(\lambda_v^c)}{\mu_v^c}\big)\Big)$, 
		\item	MapReduce (linear): $C_{v,comp}^c= \frac{1}{\mu_v^c}\Big(1+k\frac{\lambda_v^c-\gamma_f(\lambda_v^c)}{\mu_v^c}\Big)$, 
		\item Classification (convex): $C_{v,comp}^c= \frac{1}{\mu_v^c-k(\lambda_v^c-\gamma_f(\lambda_v^c))} $, 
		\end{enumerate}
		where $k$ is some constant as a proxy for the cost. In Sect. \ref{performance}, we numerically investigate the behavior of MinCost with respect to $k$. 	Note that above models satisfy $C_{v,comp}^c=\frac{1}{\mu_v^c}$ when $\gamma_f(\lambda_v^c)=\lambda_v^c$. Furthermore, if $k\to 0$, the computation cost is $C_{v,comp}^c=\frac{1}{\mu_v^c}$, which is not affected by $\gamma_f(\lambda_v^c)$. Note also that $C_{v,comp}^c$ decreases in $\gamma_f(\lambda_v^c)$, then due to (\ref{costoptimization_expanded}), there is a value of $\gamma_f(\lambda_v^c)$ that optimizes MinCost $\forall c\in\mathcal{C}$, $v\in V$.

\begin{ex}\label{ex_convex}
{\bf Modeling classification via convex flow.} We plug the $C_{v,comp}^c$ expression into the MinCost formulation in (\ref{costoptimization_expanded}), and then use (\ref{flow_entropy_relation_vector}) to compute ${\bm \gamma}_f(\lambda_v^c)$ for given set of $\lambda_v^c$'s.	
	Since the objective function is convex, i.e. $\frac{\partial^2 C}{\partial (\gamma_f(\lambda_v^c)) ^2}>0$, the optimal solution can be found by solving $\frac{\partial C}{\partial \gamma_f(\lambda_v^c)}=0$ as $\gamma_f(\lambda_v^c)=\lambda_v^c/2$, $\forall v$, $\forall c$. 
Hence, we can decide $\gamma_f(\lambda_v^c)$ values using the solution of (\ref{costoptimization_expanded}) and from the set of $V\times C$ equalities with $V\times C$ unknown $\lambda_{v}^{c}$ values due to (\ref{flow_entropy_relation_vector}) as given below: 
	\begin{align}
	\frac{\lambda_{v}^{c}}{2}=\Big[\beta_v^c + \sum\limits_{v'\in V}\frac{\lambda_{v'}^{c}}{2}p_{v',v}^{rou}(c)\Big]\Gamma_c,\quad \forall c\in \mathcal{C},\,\, v\in V,
	\end{align}
	using which we get ${\bm \lambda}^c= 2(I-P^{rou}(c)\Gamma_c)^{-1}{\bm \beta}^c \Gamma_c$, $\forall c\in \mathcal{C}$.
\end{ex}	

Since linear flow is a special case of convex flow, the optimal solution is found by solving $\frac{\partial C}{\partial \gamma_f(\lambda_v^c)}=0$ as $\gamma_f(\lambda_v^c)=\mu_v^c(1-1/\sqrt{k})$. More accurately $\frac{\partial \lambda_v^c}{\partial \gamma_f(\lambda_v^c)}=-\frac{(\mu_v^c)^2}{k(\mu_v^c- \gamma_f(\lambda_v^c))^2}+1>0$. This means that $\mu_v^c(1-1/\sqrt{k})> \gamma_f(\lambda_v^c)$. Due to space limitations we skip the discussion of the concave flow.

To find the local minima of MinCost for general computation cost functions, we use the Karush-Kuhn-Tucker (KKT) approach in nonlinear programming \cite{Boyd2009}. Allowing inequality constraints, KKT conditions determine the optimal solution:   	
	\begin{align}
	L
	= \sum\limits_{v\in V}\sum\limits_{c\in\mathcal{C}} W_v^c  + \xi_v^c (\gamma_f-\lambda_v^c)  
	+ \zeta_v^c  (\gamma_{f,\, LB}-\gamma_f),\nonumber
	\end{align}
	where $\xi=[\xi_v^c]_{v,\in V, \, c\in\mathcal{C}},\,\zeta=[\zeta_v^c]_{v,\in V, \, c\in\mathcal{C}}\geq 0$ are the dual variables, and for optimality (i.e. the solution at $\lambda=[\lambda_v^c]_{v\in V,\, c\in\mathcal{C}}=\lambda_v^{c\,*}$) the partial derivatives of $L$ satisfy
	\begin{align}
	\frac{\partial L }{\partial \gamma_f}&
	=\frac{1}{\lambda_v^{c\,*}}\frac{\partial d_f(M_v^c)}{\partial \gamma_f(\lambda_v^{c\,*})}+\frac{1}{(\mu_v^c-\gamma_f(\lambda_v^{c\,*}))^2}+\xi_v^c-\zeta_v^c=0,\nonumber\\
	\frac{\partial L }{\partial \xi_v^c}&=\gamma_f^*-\lambda_v^{c\,*}\leq 0,\quad
	\frac{\partial L }{\partial \zeta_v^c}=\gamma_{f,\, LB}^*-\gamma_f\leq 0.\nonumber
	\end{align}
	From complementary slackness, we require that 
	\begin{align}
	\xi_v^c (\gamma_f(\lambda_v^{c\,*})-\lambda_v^{c\,*}) = 0,\quad
	\zeta_v^c  (\gamma_{f,\, LB}(\lambda_v^{c\,*})-\gamma_f(\lambda_v^{c\,*})) = 0.\nonumber
	\end{align}
	The local solution of the MinCost problem is numerically derived in Sect. \ref{performance} by evaluating the above partial derivatives.

	\begin{figure*}[t!]	
	\centering
	\includegraphics[width=0.245\textwidth]{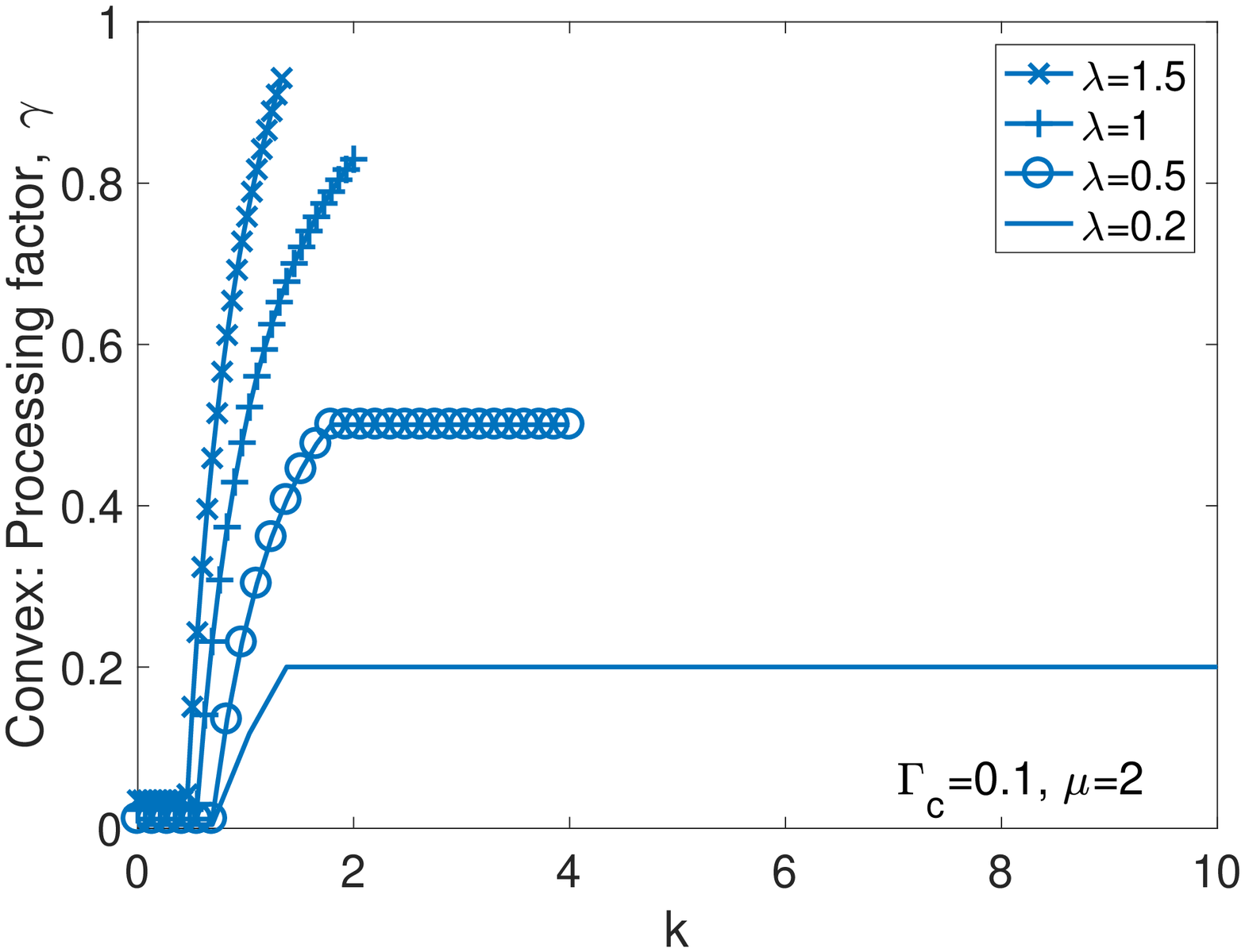}
	\includegraphics[width=0.245\textwidth]{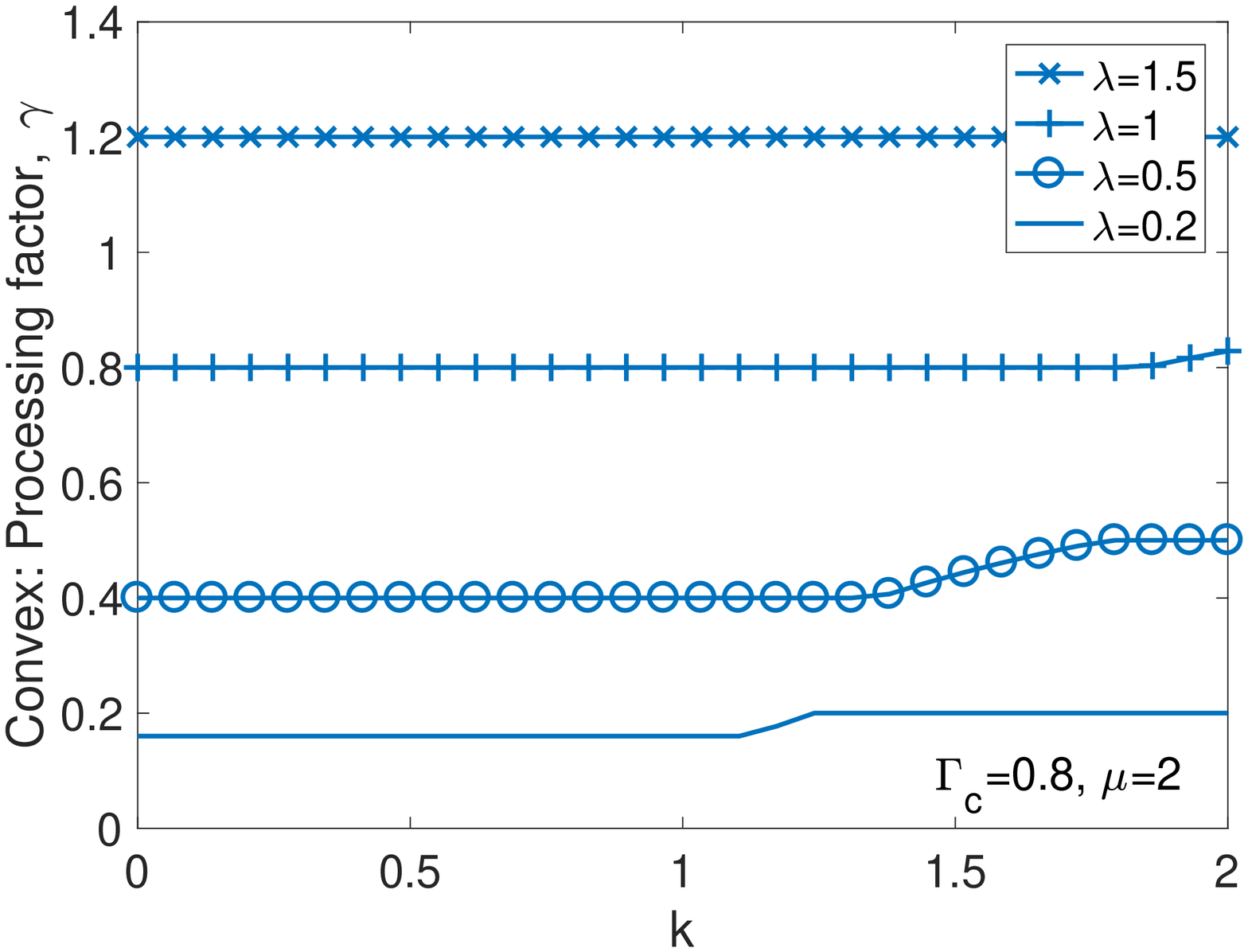}
	\includegraphics[width=0.245\textwidth]{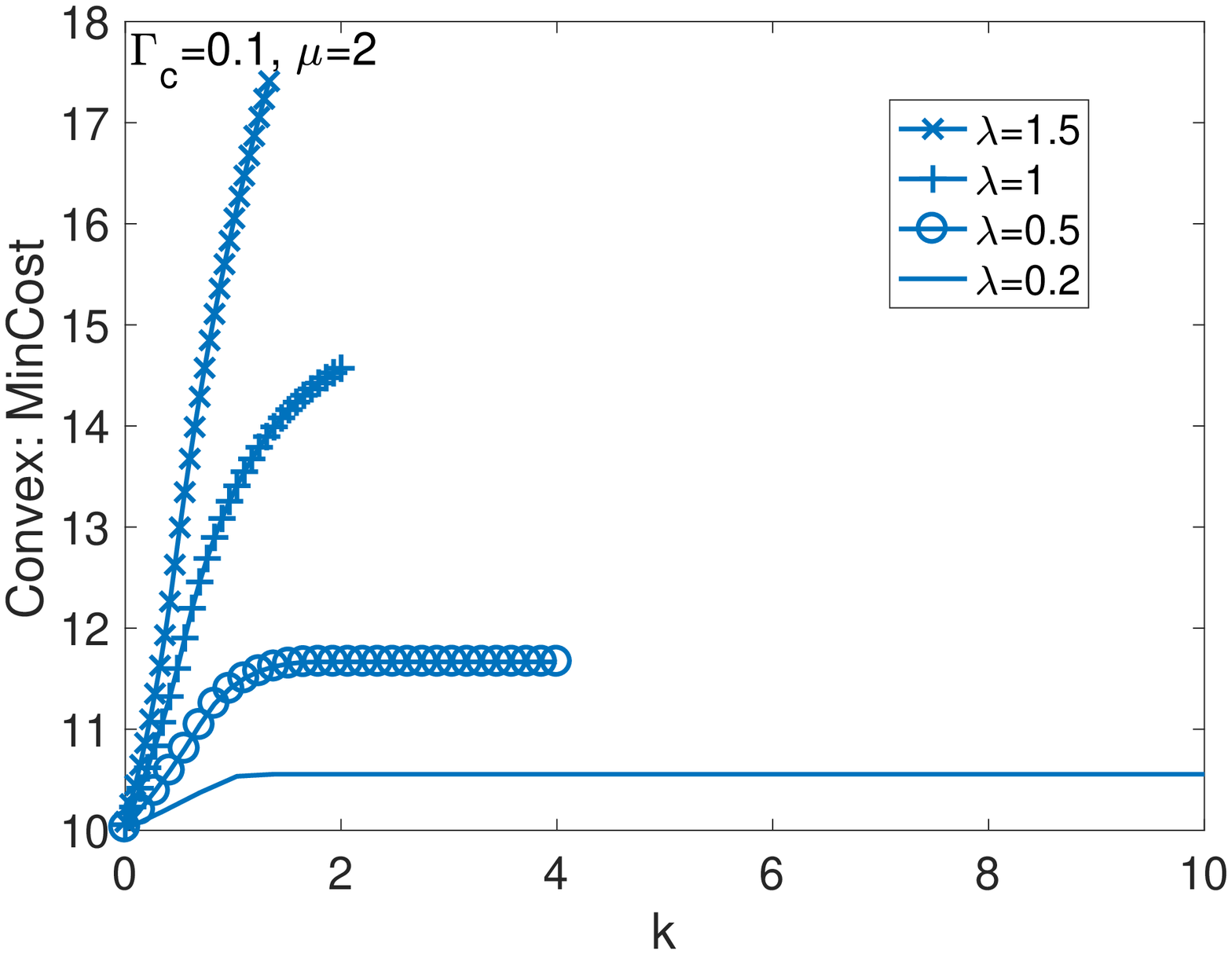}
	\includegraphics[width=0.245\textwidth]{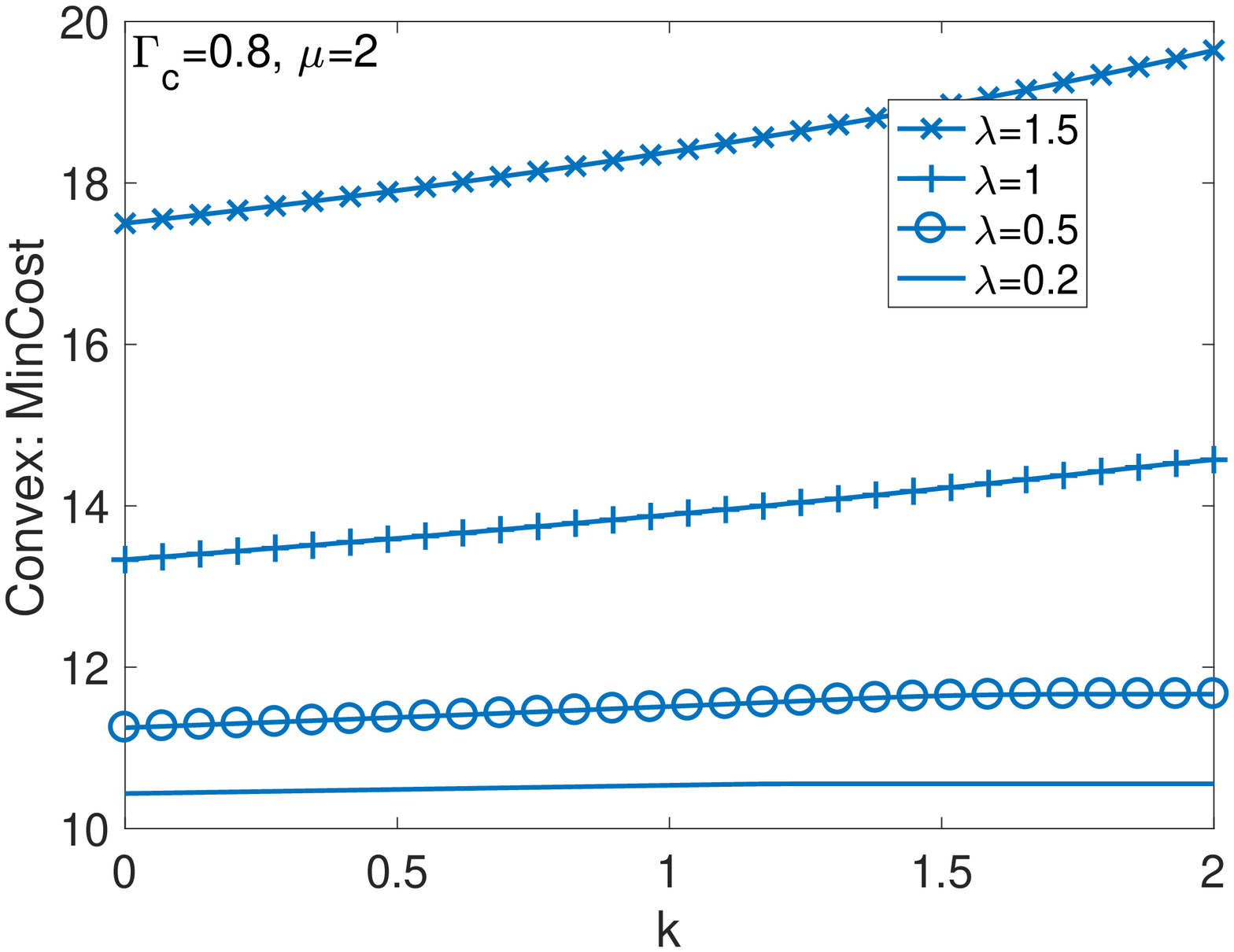}
	\caption{Convex computation cost. (L) Processing factor $\gamma_f(\lambda_v^c)$ versus computation cost scaling factor $k$. (R) MinCost versus $k$.}
	\label{Convex_MinCost_k}
	\end{figure*}	
	
		\begin{figure*}[t!]	
		\centering
		\includegraphics[width=0.245\textwidth]{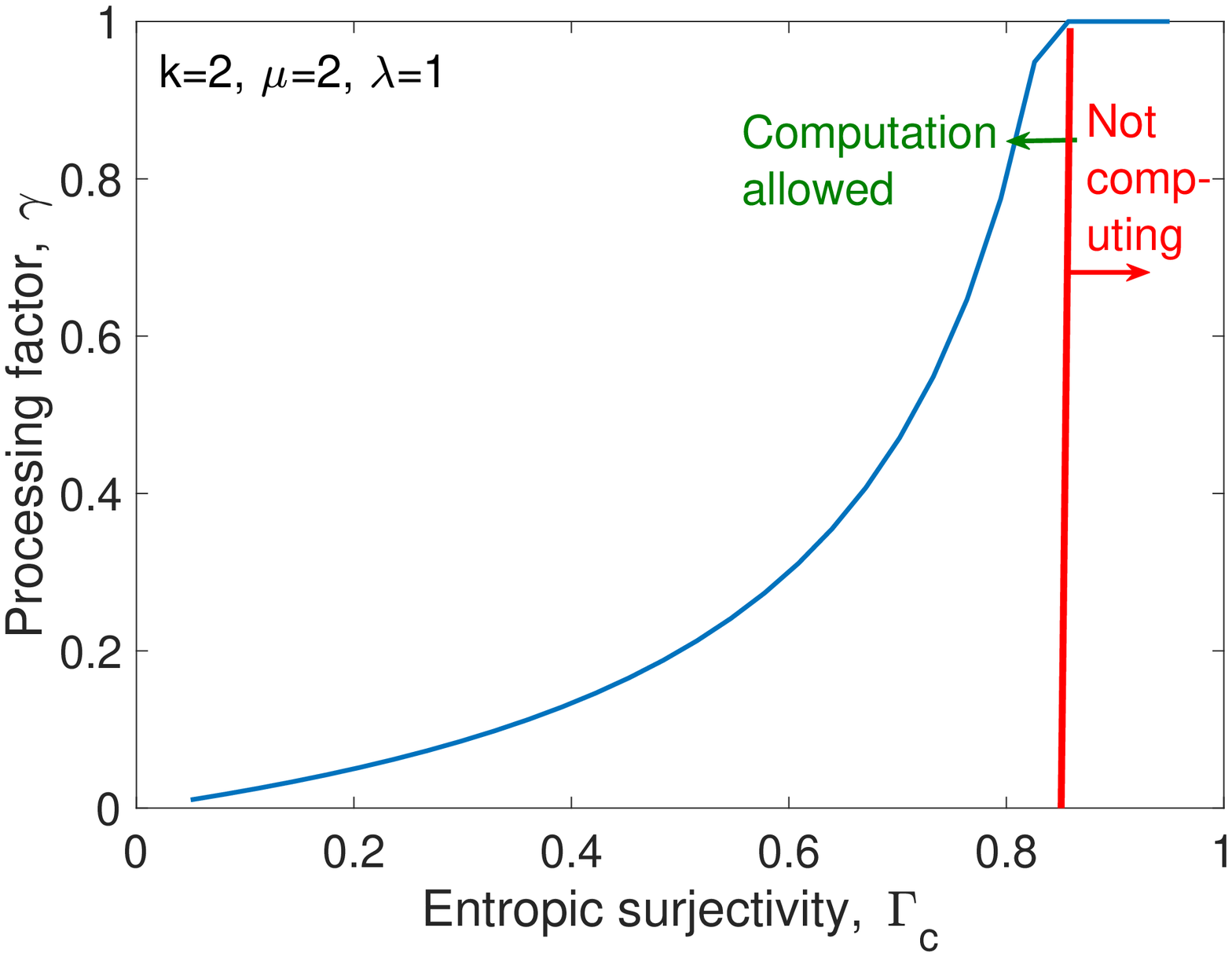}
		\includegraphics[width=0.245\textwidth]{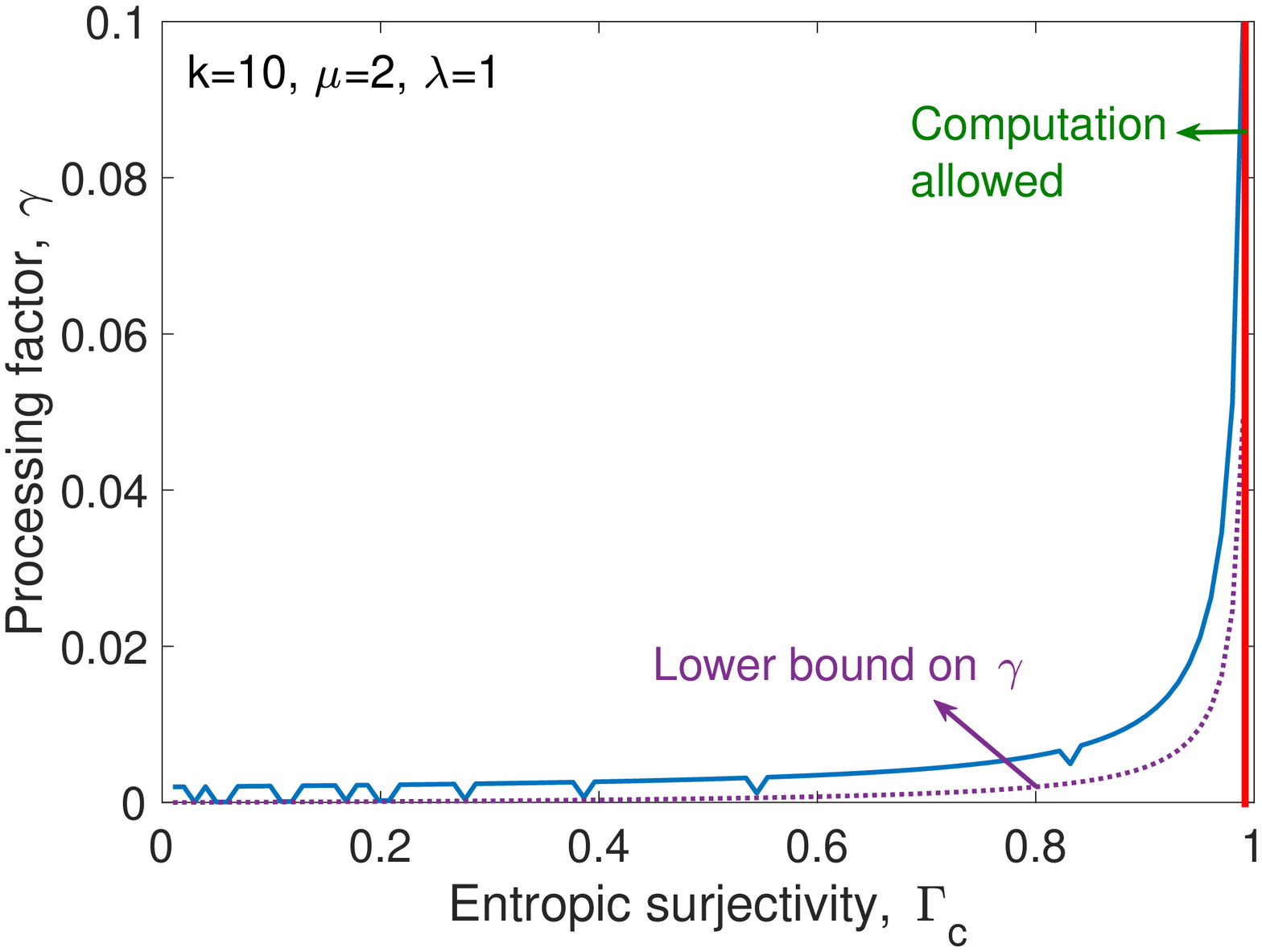}
		\includegraphics[width=0.245\textwidth]{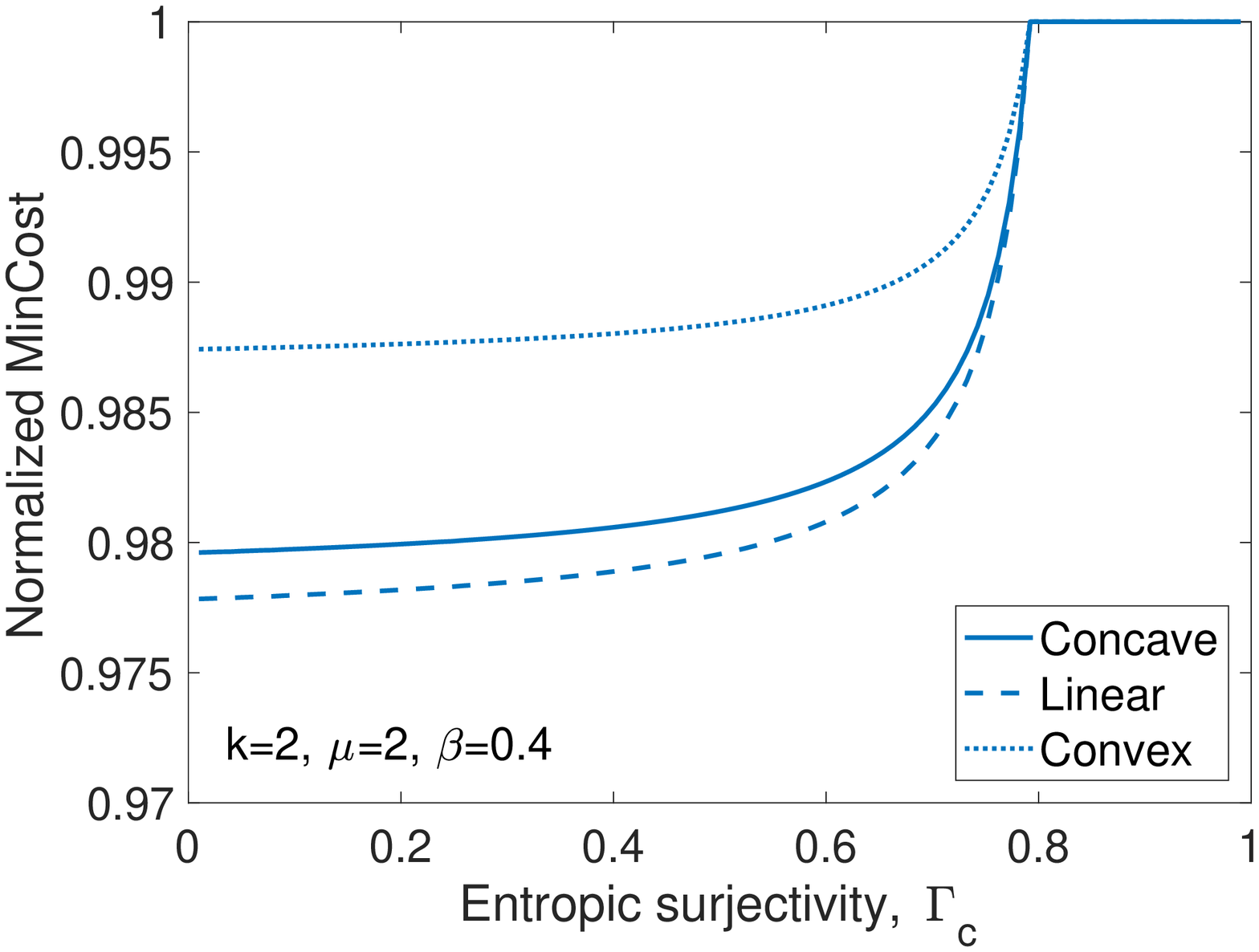}
		\includegraphics[width=0.245\textwidth]{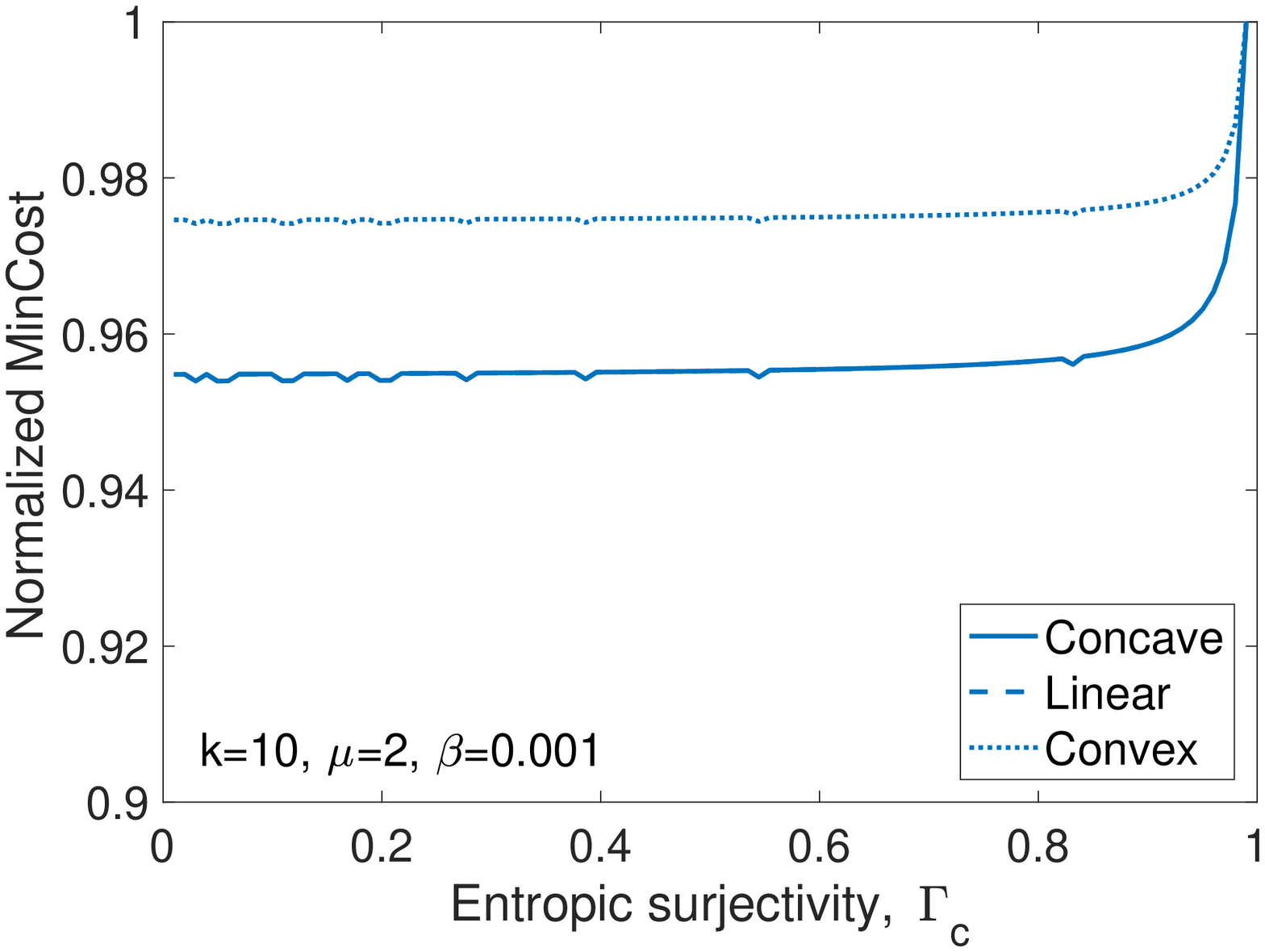}
		\caption{(L) Processing factor $\gamma_f(\lambda_v^c)$ versus entropic surjectivity $\Gamma_c$. (R) Normalized MinCost versus $\Gamma_c$ for different computation costs $k$.}
		\label{MinCost_Entropy}
		\end{figure*}

\section{Performance Evaluation}
\label{performance}
Our goal in this section is to answer the following question: What do the analytical expressions say in terms of how to distribute computation given the assumptions in Sects.\! \ref{compresstocompute}, \ref{networkmodel}? 

We first numerically compute the critical load thresholds $\rho_{th}$ in Prop. \ref{LoadThreshold} using (\ref{relaxed_condition_computation}), for special functions using the time complexity $d_f(M_v^c)$ given in (\ref{time_complexity_special_functions}) for different tasks, and illustrate them in Figure \ref{thresholdisolated}. The threshold $\rho_{th}$ increases with $M$, and gives a range $\rho<\rho_{th}$ where computation is allowed, i.e. $\gamma_f(\lambda_v^c)<\lambda_v^c$. The threshold is higher for high complexity functions such as ``classification", and lower for low complexity functions such as ``search". This is because given $\rho<\rho_{th}$, 
$d_f(M)$ for high complexity functions grows much faster than the communication cost of the generated flow. Low $\rho_{th}$ implies that node is computing even for small flow rates (or $M$) whereas high $\rho_{th}$ means that node can only compute if the flow is sufficient. Our numerical experiments show that the threshold $\rho_{th}$ is higher and converges faster for functions with high $d_f(M)$. However, if $d_f(M)$ is high, then a node can only compute when $M$ is sufficiently small such that a valid threshold exists. Benefit of computation increases from classification (fast convergence of $\rho_{th}$ to $1$), to MapReduce (modest convergence), to search (slow convergence). This implies that the limited computation resources should be carefully allocated to the computation of different tasks, with most of the resources being allocated to simple tasks. This result is valid when each node is considered in isolation, and enables handling of computation in a distributed manner. We next consider the network setting with mixing of flows. 
 
We numerically solve the MinCost problem in (\ref{costoptimization_expanded}) for some special functions. We assume randomized routing such that $P^{rou}(c)$ is a valid stochastic matrix, and that ${\bm \beta}^c$ and ${\bm \mu}^c$ are known. Furthermore, the values of $\Gamma_c$ and ${\bm \lambda}^c$ are also known apriori. In Figures \ref{Concave_MinCost_k}, \ref{Linear_MinCost_k}, and \ref{Convex_MinCost_k}, we investigate the behavior of MinCost versus computation cost scaling factor $k$, for search (or concave), MapReduce (or linear) \cite{LiAliYuAves2018}, and classification (or convex) functions, respectively (figures on R). We also investigate the trend of $\gamma_f(\lambda_v^c)$ as function of $k$ (figures on L). For all cases, it is intuitive that the processing factor $\gamma_f$ and MinCost should increase with $k$, and the rate of increase is determined by $\lambda$ and the time complexity of computation $d_f(M)$. We can also observe that as the entropic surjectivity $\Gamma_c$ of the function increases from $0.1$ to $0.8$, the function is less predictable, and the MinCost becomes higher as the nodes need to generate a higher processing factor $\gamma_f$. 


We next investigate the behavior of $\gamma_f(\lambda_v^c)$ versus $\Gamma_c$ (Figure \ref{MinCost_Entropy} L), and MinCost versus $\Gamma_c$ (Figure \ref{MinCost_Entropy} R) as a measure of surjectivity, where the values of $\Gamma_c$ and ${\bm \lambda}^c$ are coupled and have to be jointly determined. We observe that $\gamma_f(\lambda_v^c)$ is sensitive to surjectivity but not really to the time complexity (i.e. concave, convex, linear, etc) of the computation cost function. However, it is sensitive to the scaling $k$ as can be seen from Figure \ref{MinCost_Entropy} L. For $k=2$, $\gamma_f=\gamma_{f,\, LB}$ as computation is cheap, and for $k=10$, $\gamma_f>\gamma_{f,\, LB}$ as computation is expensive and the nodes need a higher processing factor. We expect that $\gamma_f(\lambda_v^c)$ increases in $\Gamma_c$ because it becomes harder to compress as the entropy of the function is higher and $C_{v,comp}^c$ should be higher.  
From Figure \ref{MinCost_Entropy} R, 
the behavior of MinCost is modified by the type and scaling of the computation cost function as well as $\Gamma_c$.  
Note that communication of the sources themselves gives an upper bound to the MinCost problem. Therefore, we normalize the MinCost with respect to the cost of communications only. When the computation is cheap, (e.g. concave, or linear with $k=2$ as in LHS of R), the nodes can compress the sources as long as $\Gamma_c$ is not high enough such that normalized MinCost is $1$. However, as the computation becomes costly (e.g. when $k=10$ as in RHS of R), compression of the sources does not minimize the overall cost. We also observe in RHS of this figure, for $k=10$, concave and linear cost functions are the same because the load of a node is smaller. 
The sources cannot be compressed beyond this because optimizing the total (communication and computation) cost is crucial. As entropic surjectivity goes up ($\Gamma_c\to 1$), implying that the function is surjective, we can infer that computing will not be allowed beyond a range where the value of $\Gamma_c$ is larger than a threshold. This is because allocating resources to computation does not incur less cost than communicating the entire source data. 

A node can perform computation and forward the processed data if the range of $\Gamma_c$ allowing compression is flexible. This is possible when computation is cheap. However, if a node's compression range is small, then the node simply relays most of the time. This indeed is the case when computation is very expensive. While computing at the source and communicating the end computation result might be feasible for some classes of functions, it might be very costly for some sets of functions due to the lack of cooperation among multiple sources. By making use of redundancy of data across geographically dispersed sources and the function to be computed, it is possible to decide how to distribute the computation in the network.

Our approach can be considered as a preliminary step for a better understanding of how to distribute computation in networks. Directions include devising coding techniques for in network functional compression, by blending techniques from compressed sensing to the Slepian and Wolf compression, and employing the concepts of graph entropy, and exploiting function surjectivity. They also include the extension to multi-class models with product-form distributions, allowing conversion among classes of packets when routed from/to a node.

\begin{appendix}

	The upper bound in (\ref{FlowBounds}) follows from the case of no computation. In this case, the long-term average number of packets in $v$ satisfies that $L_v^c=M_v^c=\frac{\lambda_v^c}{\mu_v^c(1-\rho_v^c)}$. However, when we allow function computation we expect to have $L_v^c\neq M_v^c$. 
	
	Assume that $\frac{1}{\lambda_v^c}d_f(M_v^c)\geq \frac{1}{\mu_v^c}$. If this assumption did not hold, we would have $d_f(M_v^c)
	<\frac{\lambda_v^c}{\mu_v^c-\gamma_f(\lambda_v^c)}$. For stability, the long-term average number of packets in $v$ waiting for communications service, i.e. $M_v^c$, should be upper bounded by the long-term average number of packets in $v$ waiting for computation service, i.e. $L_v^c-M_v^c$. Otherwise, $M_v^c$ will increase over time, which will violate the  stationarity assumption. 
	
	The lower bound follows from the definition of Little's law:
	\begin{align}
	L_v^c=\gamma_f(\lambda_v^c)\Big[\frac{1}{\lambda_v^c}d_f(M_v^c)+\frac{1}{\mu_v^c-\gamma_f(\lambda_v^c)}\Big]
	\overset{(a)}{\geq} \Hg(f_c(X_1^N)),\nonumber
	\end{align}

	\noindent where 
	$(a)$ is for recovering the function at the destination. Manipulating the lower bound relation above, we obtain 
	\begin{align}
	\gamma_f(\lambda_v^c)\geq 
	\mu_v^c\Big[\frac{\Hg(f_c(X_1^N))}{2}+1-\sqrt{\frac{\Hg(f_c(X_1^N))^2}{4}+1}\Big],\nonumber
	\end{align}
	using which we get the desired lower bound.

\end{appendix}
\balance

\bibliographystyle{IEEEtran}
\bibliography{Derya}

\end{document}